\documentclass[11pt, oneside]{amsart}   	
\usepackage{geometry}                		
\geometry{letterpaper}                   		
\usepackage{graphicx}				

\newcommand{\R}{\mathbb{R}}

\usepackage{amssymb,amsmath,amsfonts,latexsym}
\usepackage{bbm}
\newtheorem{theorem}{Theorem}

\title{A Note on Markov Normalized Magnetic Eigenmaps}
\author{Alexander Cloninger}
\date{}							

\begin{document}
\maketitle

\begin{abstract}
We note that building a magnetic Laplacian from the Markov transition matrix, rather than the graph adjacency matrix, yields several benefits for the magnetic eigenmaps algorithm. The two largest benefits are that the embedding becomes more stable as a function of the rotation parameter g, and the principal eigenvector of the magnetic Laplacian now converges to the page rank of the network as a function of diffusion time. We show empirically that this normalization improves the phase and real/imaginary embeddings of the low-frequency eigenvectors of the magnetic Laplacian.
\end{abstract}

\section{Introduction}
We consider the problem of building a diffusion operator on directed networks and graphs with adjacency matrix $W\in \R^{N\times N}$ that parallels Diffusion Maps on undirected networks \cite{diffMaps}.  A natural approach to this is to build the Magnetic Eigenmaps Laplacian \cite{magMaps},
\begin{eqnarray*}
L &=& D - T\circ W^{(s)},
\end{eqnarray*}
where $(T\circ W^{(s)})_{i,j} = e^{2\pi \mathrm{i} g (W_{j,i} - W_{i,j})} \frac{W_{i,j} + W_{j,i}}{2}$ and $D_{i,i} = \sum_j W^{(s)}_{i,j}$.  This method assigns a complex rotation to asymmetric links.  This approach generalizes the notion of building a diffusion map or graph embedding to work on directed graphs.

In this note, we observe that Markov normalization of the adjacency matrix,
\begin{eqnarray*}
P(x,y) = \frac{W(x,y)}{\sum_y W(x,y)},
\end{eqnarray*} 
allows us to introduce the notion of diffusion time steps, which is capable of capturing the evolution of the process.   Markov normalization of the graph adjacency matrix yields several benefits:
\begin{enumerate}
\item normalizing the complex rotation matrix $e^{2\pi\mathrm{i} g (P_{j,i} - P_{i,j})}$ by the density of the node, which emphasizes absorbing states and converges to the stationary distribution of the process,
\item stabilizing the phase embedding with respect to choice of the rotation parameter $g$,
\item separating the long time trend of the process into the first eigenvector and recurrent states into subsequent eigenvectors, and
\item introducing diffusion time, which allows for the study of multi-step neighborhoods.
\end{enumerate}

We further note that it is possible to augment the adjacency matrix by a transportation factor, as is done in the PageRank algorithm , when there exists an absorbing state \cite{pagerank}.  For a small $\alpha>0$, we replace $P \rightarrow (1-\alpha)P + \alpha \mathbbm{1} \mathbbm{1}^{\intercal}$.  This creates a small probability of jumping from any node to any other node, and thus escaping a sink in the process and making the process ergodic.

For the rest of the paper we will assume that the adjacency matrix $P$ is ergodic, under the pretense that if there is an absorbing state, we simply add a transportation factor.

\section{Markov Property and Diffusion Time}

An important aspect of diffusion maps is the ability to separate the statistics of the sampling distribution from the geometry of the underlying manifold \cite{lafonThesis}.  This is accomplished by normalizing the affinity between points by the overall transition probability,
\begin{eqnarray*}
P = T^{-1} W, & \textnormal{where} & T_{x,y} = \begin{cases} \sum_y W(x,y), & x=y \\ 0, & x\neq y \end{cases}.
\end{eqnarray*}

This also allows us to consider the properties of the multiple time step diffusion process $P^t$ for $t \in \R^+$.  For a symmetric process (i.e. undirected graph), the embedding of this process remains stable.  This is because, if the eigendecomposition is $P = \Phi \Lambda \Phi^*$, then
\begin{eqnarray*}
P^t = (\Phi \Lambda \Phi^*)^t =  \Phi  \Lambda^t \Phi^*.
\end{eqnarray*}
However, for a non-symmetric graph, this property no longer holds.  Thus the eigenvectors of $P^t$ will exhibit time dependence beyond a simple multiplicity factor.  

The markov normalized magnetic Laplacian then takes the form of
\begin{eqnarray*}
L^{(t)}_{i,j} = D_{i,j} - e^{2\pi \mathrm{i} g (P^t_{j,i} - P^t_{i,j})} \frac{P^t_{i,j} + P^t_{j,i}}{2}, &\textnormal{where}& D_{i,i} = \frac{1}{2}\sum_j \left(P^t_{i,j} + P^t_{j,i}\right).
\end{eqnarray*}

Adjusting the weight matrix to be Markov normalized has several implications on the magnetic Laplacian.  In diffusion on a symmetric graph, normalizing by the node density causes the principal eigenvector $\phi_0$ to be a constant across all nodes and is thus trivial.  However, when the network is not symmetric, the principal eigenvector takes on a much more important role.  Namely, the principal eigenvector recovers the degree of asymmetry between nodes.  This implies that, as $t$ increases, the phase of the eigenvectors converges to the page rank of the network.

\begin{theorem}
Suppose there exists $T$ such that $P^T_{i,j}>0$ for all $i,j$.  Let $L_{norm} = D^{-1/2}LD^{-1/2}$ be the normalized magnetic Laplacian.  Then in the limit as $t\rightarrow \infty$, the principal eigenvector of the normalized Laplacian $L_{norm} = \lim_t L_{norm}^{(t)}$ (denoted $\phi$) converge to 
\begin{eqnarray*}
\phi_{i}^{(g)} = c \cdot e^{2\pi g h_i} \left(\frac{1+\sum_j P_{j,i}}{2}\right)^{1/2},
\end{eqnarray*}
where $h = pagerank(P)$ (i.e. the stationary distribution of $P$), and $c\in \mathbb{C}$.
\end{theorem}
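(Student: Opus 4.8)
The plan is to first collapse the $t\to\infty$ limit into a single concrete Hermitian matrix, and then exploit the fact that the magnetic phase is an exact discrete gradient so that it can be gauged away. Since $P^T_{i,j}>0$, the stochastic matrix $P$ is primitive, so by Perron--Frobenius its powers converge to the rank-one matrix whose rows all equal the stationary distribution: $\lim_t P^t_{i,j}=h_j$. Substituting into $L^{(t)}$, the symmetric affinity tends to $\tfrac{h_i+h_j}{2}$, the connection phase tends to $e^{2\pi\mathrm{i} g(h_i-h_j)}$, and because each row of $P^t$ sums to $1$ for every $t$, the degree tends to $D_{i,i}=\tfrac12\big(1+\sum_j P_{j,i}\big)$ (reading $\sum_j P_{j,i}$ as the limiting column sum $\sum_j h_i$). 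Thus $L_{norm}=D^{-1/2}(D-A)D^{-1/2}$ with the Hermitian affinity $A_{i,j}=e^{2\pi\mathrm{i} g(h_i-h_j)}\tfrac{h_i+h_j}{2}$.

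The key observation is that the phase $2\pi g(h_i-h_j)$ is the discrete gradient of the potential $\alpha_i=2\pi g h_i$, so the magnetic connection is trivial in cohomology and can be removed by a gauge transformation. Concretely, set $U=\operatorname{diag}(e^{2\pi\mathrm{i} g h_i})$; then $U^*AU$ has entries $\tfrac{h_i+h_j}{2}\ge 0$, so $U^* L_{norm} U = D^{-1/2}(D-|A|)D^{-1/2} =: \tilde L_{norm}$ is exactly the ordinary real normalized graph Laplacian of the weighted graph with edge weights $w_{ij}=\tfrac{h_i+h_j}{2}$. Since $U$ is unitary and commutes with the diagonal $D$, the matrices $L_{norm}$ and $\tilde L_{norm}$ share spectra and their eigenvectors differ only by the factor $U$.

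Now I would invoke the standard theory of the normalized graph Laplacian. All weights $w_{ij}$ are strictly positive, so the underlying graph is connected; hence $\tilde L_{norm}$ is positive semidefinite, its smallest eigenvalue is $0$, and that eigenvalue is simple with eigenvector $D^{1/2}\mathbbm{1}$. The principal (lowest-frequency) eigenvector of $L_{norm}$ is therefore $\phi=U D^{1/2}\mathbbm{1}$, unique up to a global constant $c\in\mathbb{C}$, whose $i$-th component is $\phi_i=c\, e^{2\pi\mathrm{i} g h_i}D_{i,i}^{1/2}=c\, e^{2\pi\mathrm{i} g h_i}\big(\tfrac{1+\sum_j P_{j,i}}{2}\big)^{1/2}$, as claimed. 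As a consistency check one verifies directly that $(D-A)\,U\mathbbm{1}=0$ reduces to $D_{i,i}=\sum_j \tfrac{h_i+h_j}{2}$, which is precisely the definition of $D_{i,i}$.

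The main obstacle is conceptual rather than computational: one must recognize that the limiting connection is a coboundary so that the gauge trick applies—without this the $N\times N$ complex eigenproblem admits no obvious closed form. The remaining care is bookkeeping: confirming that primitivity yields the row-wise convergence of $P^t$ (the one genuinely analytic ingredient), checking that ``principal'' is read as the smallest eigenvalue of the positive semidefinite $L_{norm}$, and that simplicity—hence uniqueness up to $c$—follows from connectivity of the positively weighted complete graph. I would also flag the mild abuse in the statement whereby $\sum_j P_{j,i}$ denotes the limiting column sum $\sum_j h_i$ and the exponent should read $e^{2\pi\mathrm{i} g h_i}$.
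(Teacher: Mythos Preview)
Your proof is correct and follows essentially the same route as the paper: take the $t\to\infty$ limit so that the connection phase becomes the discrete gradient $h_j-h_i$, gauge it away to reduce to the ordinary normalized Laplacian, and then identify the principal eigenvector there as $D^{1/2}\mathbbm{1}$. The only difference is presentational---the paper cites Theorem~1 of \cite{magMaps} for the gauge step while you carry out the unitary conjugation by $U=\operatorname{diag}(e^{2\pi\mathrm{i} g h_i})$ explicitly---and your remarks about the typos in the statement ($\sum_j P_{j,i}$ meaning the limiting column sum and the missing $\mathrm{i}$ in the exponent) are on point.
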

\begin{proof}
From Theorem 1 in \cite{magMaps}, we know that the magnetic Laplacian $L$ has a zero eigenvalue iff $\exists h$ such that $a_{i,j} = h_j - h_i$.  Furthermore, we know that, in this case, $\phi^{(g)}_{k,i} = e^{2\pi \mathrm{i} g h_i} \phi^{(0)}_{k,i}$.   We also know that, since $P$ is ergodic by assumption, $P^t_{i,\cdot} = h$ as $t\rightarrow \infty$, where $h = pagerank(P)$.  That means $a_{i,j} = P^t_{i,j}-P^t_{j,i} = h_j - h_i$. 

Let $v$ be the principal eigenvector of $I - D^{-1/2}P_{sym} D^{-1/2}$, where $P_{sym} = \left(\frac{P+P^*}{2}\right)$ and $D_{xx} = \sum_y (P_{sym})_{xy} = \frac{1+\sum_x P_{xy}}{2}$.  Then
\begin{eqnarray*}
\left(I - D^{-1/2}P_{sym} D^{-1/2}\right) v = 0 &\implies& \left(D^{1/2} - P_{sym} D^{-1/2}\right) v = 0\\
&\implies& \left(D - P_{sym} \right) x = 0\\
&\implies& \left(I - D^{-1} P_{sym} \right)x = 0,
\end{eqnarray*}
where $x = D^{-1/2}v$.  Because $D^{-1} P_{sym}$ is row stochastic, we know $x = \mathbbm{1}$, so $v = D^{1/2}\mathbbm{1}$.

Since $\phi^{(g)}_{k,i} = e^{2\pi \mathrm{i} g h_i} \phi^{(0)}_{k,i}$, we know 
\begin{eqnarray*}
\phi^{(g)}_{i} &=& e^{2\pi \mathrm{i} g h_i} (c\cdot D^{1/2}_i )\\
&=& c\cdot e^{2\pi \mathrm{i} g h_i} \left(\frac{1+\sum_j P_{j,i}}{2}\right)^{1/2}.
\end{eqnarray*}

\end{proof}

This only applies in the limit as $t\rightarrow \infty$.  The embeddings of $L^{(t)}$ for small $t$, and even $t=1$, yield valuable properties due to the Markov normalization.  We show this empirically in Section \ref{examples}.


\section{Examples}\label{examples}
In all examples of the Markov normalized process, we will refer to a value of $g\in[0,1/2)$ that is on the same scale as for the unnormalized Laplacian.  However, due to the weights being normalized to sum to 1, they are actually on a different scale than the unnormalized weights matrix.  For that reason, we set $g \leftarrow g / max(max(P))$ for the Markov normalized weights so that the degree of rotation is on the same order of magnitude for both methods.

\subsection{Three Cluster Example}\label{threecluster}
We begin with the three cluster example from \cite{magMaps}.  This example is particularly simple given that the complex rotation of three clusters will, more often than not, keep the clusters separated.  However, for small values of $g$, rotation of the unnormalized asymmetric weights matrix yields insufficient separation between the clusters, as shown in Figure \ref{fig:theeClusters}.   In fact, Figure \ref{fig:threeClustersRandG} shows the percentage of points that cluster correctly across 100 uniform random small values of $g$ ($g<0.25$), where clustering was done by k-means with $k=3$.

\begin{figure}[!h]
\footnotesize
\begin{tabular}{ccc}
\includegraphics[width=.25\textwidth]{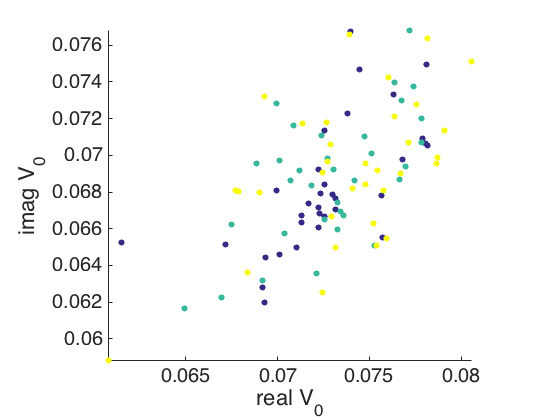} & 
\includegraphics[width=.25\textwidth]{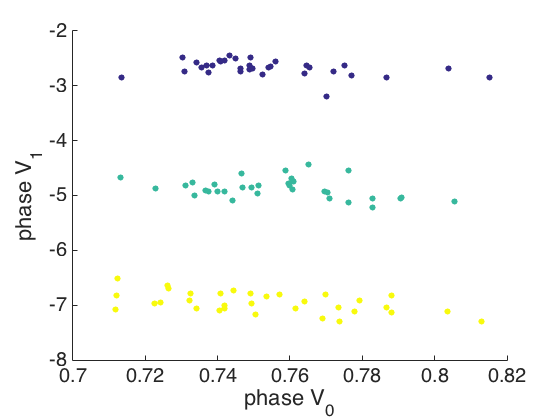} &
\includegraphics[width=.25\textwidth]{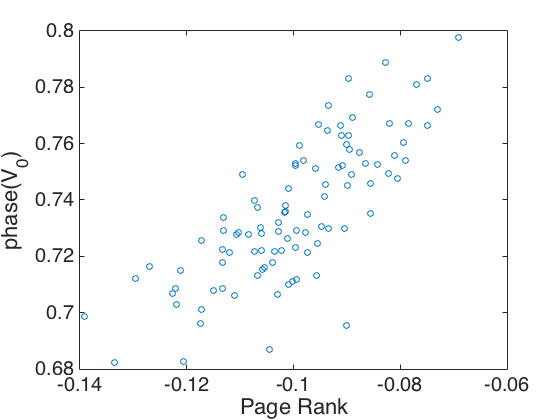} \\
Unnormalized Embedding & Unnormalized Phase & Unnormalized Phase vs Page Rank \\
\includegraphics[width=.25\textwidth]{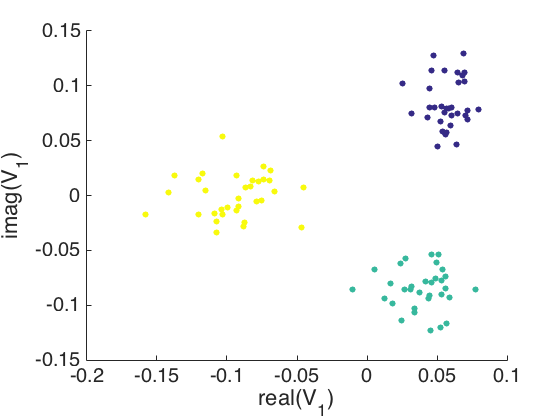} & 
\includegraphics[width=.25\textwidth]{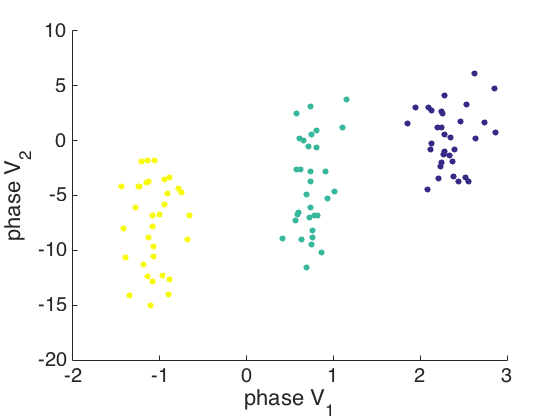} &
\includegraphics[width=.25\textwidth]{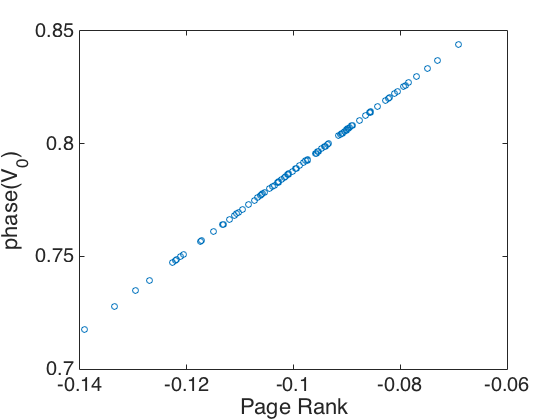} \\
Markov Normalized Embedding & Markov Normalized Phase & Markov Normalized Phase vs Page Rank
\end{tabular}
\caption{Three clusters with $P(inCluster)=0.5$, $P(outCluster)=0.5$, $P(rotateClockwise) = 0.9$, and with parameter $g=0.04$.  The markov normalized embeddings are with $t=1$, though the bottom right page rank plot is for $t=4$ to allow sufficient time to converge. }\label{fig:theeClusters}
\end{figure}

\begin{figure}[!h]
\begin{tabular}{cc}
\includegraphics[width=.45\textwidth]{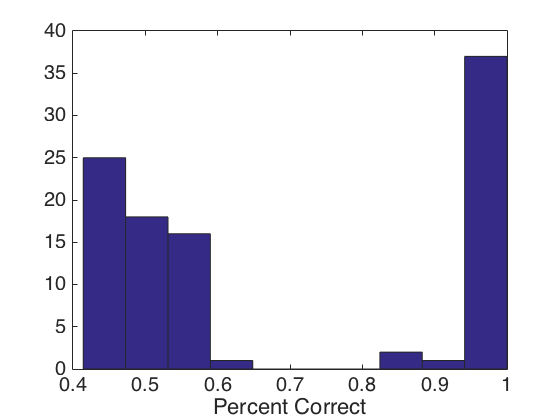} & 
\includegraphics[width=.45\textwidth]{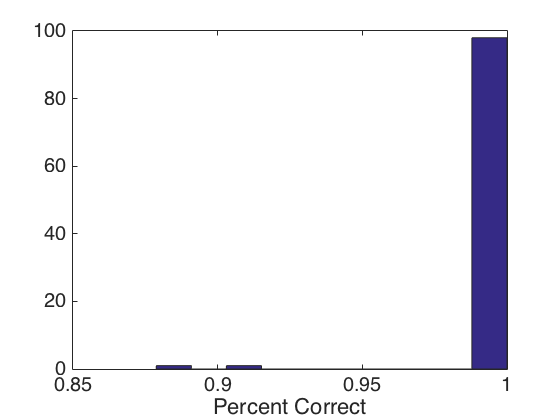} \\
Unnormalized Embedding Hist & Markov Normalized Embedding Hist
\end{tabular}
\caption{Three clusters with $P(inCluster)=0.5$, $P(outCluster)=0.5$, $P(rotateClockwise) = 0.9$, and with random parameter $g<0.25$.  The markov normalized embeddings are with $t=1$. }\label{fig:threeClustersRandG}
\end{figure}

By incorporating the notion of diffusion time, we are also able to observe the time dynamics of this process.  Figure \ref{fig:theeClusterSpin} in the appendix shows the embedding for each $t\in \{1, 2, ..., 9\}$.  Notice that after 3 and 4 time steps, the clusters become momentarily blurred.  This is because, by that time, most of the energy from cluster 1 will have moved to clusters 2 and 3, etc.  However, after that momentary blurring, the process stabilizes and we continue to see the three clusters rotating about each other for larger $t$.  This tells us that there is a global mixing time of $t>1$ for the process to satisfy $P_{i,j} > \epsilon$ for all $i,j$.

\subsection{Circle with Drift}
We consider data distributed on a unit circle, with drift in the counterclockwise direction.  This is an interesting process as it has a very large mixing time.  We define the affinity between points to be 
\begin{eqnarray*}
k(x,y) = \begin{cases} e^{-\|x-y\|^2/5\sigma^2}, &  \measuredangle xy \ge 0 \\  e^{-\|x-y\|^2/\sigma^2}, &  \measuredangle xy < 0\end{cases},
\end{eqnarray*}
where $ \measuredangle xy$ measures the angle between the points.  The affinity matrix is seen in Figure \ref{fig:circleDriftAff} in the appendix.   

Figure \ref{fig:circleDrift} compares the different normalizations and eigenvectors for the circle with drift.  By normalizing the weights matrix, the first eigenvector is once again the trivial eigenvector.  Subsequent eigenvectors recover the geometry of the circle.

\begin{figure}[!h]
\begin{tabular}{cc}
\includegraphics[width=.3\textwidth]{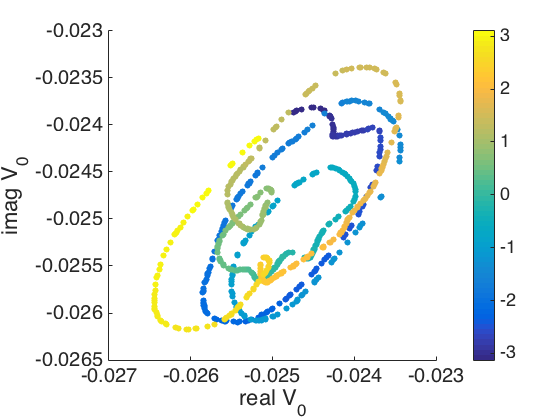} & 
\includegraphics[width=.3\textwidth]{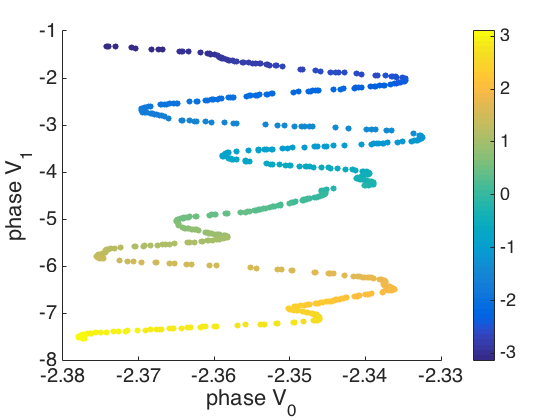} \\
Unnormalized Embedding & Unnormalized Phase \\
\includegraphics[width=.3\textwidth]{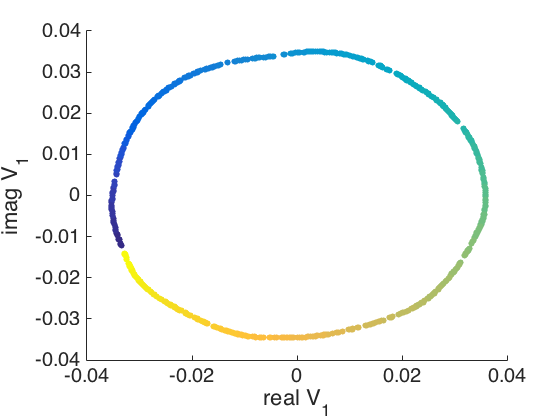} & 
\includegraphics[width=.3\textwidth]{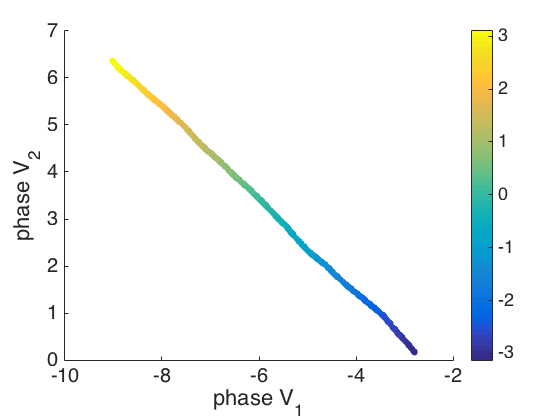} \\
Markov Normalized Embedding & Markov Normalized Phase
\end{tabular}
\caption{Circle with drift, colored by angle in original data.  Top: Unnormalized and leading two eigenvectors, Bottom: Markov normalized and first two non-trivial eigenvectors.  $g=0.04$}\label{fig:circleDrift}
\end{figure}


The higher order eigenvectors also contain useful information, as in diffusion maps.  They mimic the sins and cosines of the angle around the circle, as the Fourier functions $\{e^{\mathrm{i}nx}\}$ are eigenfunctions of the Laplacian on the circle.  We wish to recover these functions independent of the sampling density on the circle.  Figure \ref{fig:circleDriftEigs} in the appendix shows that the Markov normalized weights matrix gives a closer approximation of the sin multiples on the circle that comes from the unnormalized weights matrix.  Note that the data used here is not uniformly sampled across the circle.

\subsection{Bow Tie Example}
Another issue with not including a diffusion time is the need for $g$ to implicitly assume the number of clusters.  By setting $g$ to be $1/k$ for $k$ clusters as the authors suggest \cite{magMaps}, the complex rotation can lose valuable information about transitions that aren't $k$ periodic.  We demonstrate this with an example similar to the three cluster example from Section \ref{threecluster}.  We build a three cluster cycle among clusters 1,2, and 3, and a 5 cluster cycle among clusters 1, 4, 5, 6, and 7.  By normalizing the weights, we see in Figure \ref{fig:bowtie} that the non-principal eigenvectors capture the geometry of the process.  Moreover, we discount the fact that cluster 1 appears in both cycles, and will end with twice the energy of the other clusters for large $t$.
\begin{figure}[!h]
\begin{tabular}{cc}
\includegraphics[width=.3\textwidth]{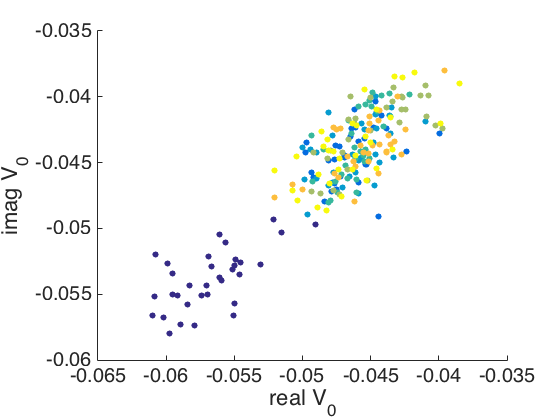} & 
\includegraphics[width=.3\textwidth]{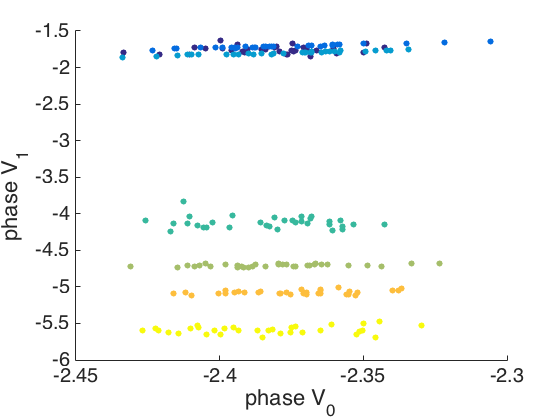} \\
Unnormalized Embedding & Unnormalized Phase \\
\includegraphics[width=.3\textwidth]{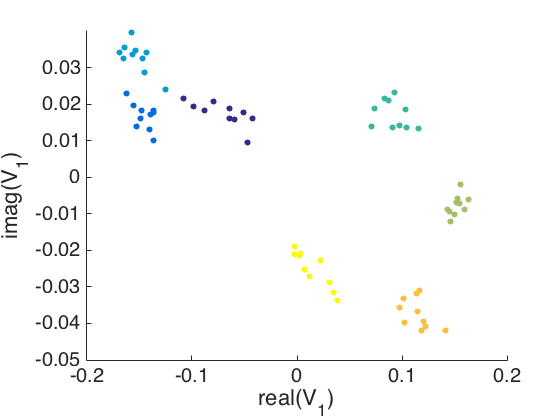} & 
\includegraphics[width=.3\textwidth]{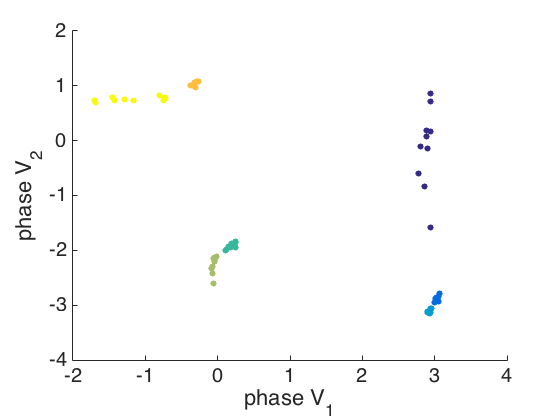} \\
Markov Normalized Embedding & Markov Normalized Phase
\end{tabular}
\caption{Bow tie clusters with $P(inCluster)=0.5$, $P(outCluster)=0.5$, $P(rotateClockwise) = 0.9$, and with $g=0.04$.  For Markov normalization, we take $t=1$.}\label{fig:bowtie}
\end{figure}

We are also able to observe the time dynamics of this process.  Figure \ref{fig:bowtieSpin} in the appendix shows the embedding for each $t\in \{1, 2, ..., 9\}$.  Notice that after 7 and 8 time steps, the clusters become momentarily blurred.  This is because, by that time, most of the energy from one end of the bow tie has moved to the other.  However, after that momentary blurring, the process stabilizes and we continue to see the three clusters rotating about each other for larger $t$.  This tells us that there is a global mixing time of $t>6$ for the process to satisfy $P_{i,j} > \epsilon$ for all $i,j$.  We can also see these dynamics in the affinity matrix in Figure \ref{fig:bowtieAff}.  And finally, we see in Figure \ref{fig:bowtieAff} that after a large $t$, the phase of the leading eigenvector converges to the page rank of the process.

\subsection{Hidden Circle}\label{hiddenCircle}
As a different example, we consider a mix of absorbing and recurrent states.  The data we consider is a unit square with drift from left to right via
\begin{eqnarray*}
k(x,y) = \begin{cases} e^{-\|x-y\|^2/3\sigma^2}, &  x_1 < y_1 \\  e^{-\|x-y\|^2/\sigma^2}, &  x_1 \ge y_1 \end{cases}.
\end{eqnarray*}
On top of that, we add an annulus in the middle of the unit square with counterclockwise flow.  The results of the phase and embedding are in Figure \ref{fig:squareCircle}.

\begin{figure}[!h]
\begin{tabular}{ccc}
\includegraphics[width=.3\textwidth]{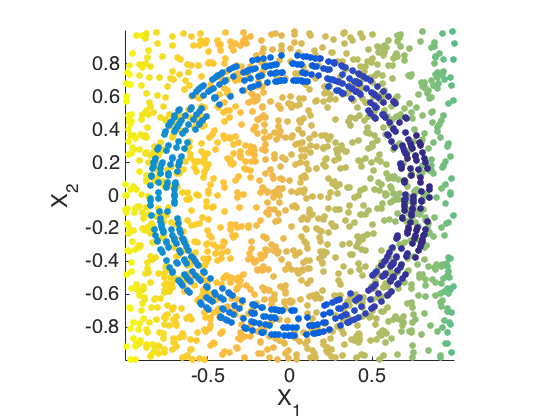} & 
\includegraphics[width=.3\textwidth]{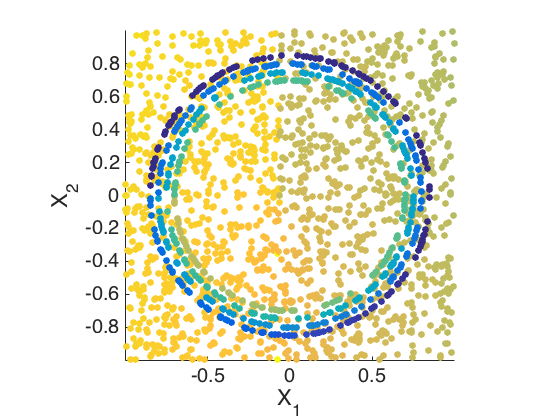} & 
\includegraphics[width=.3\textwidth]{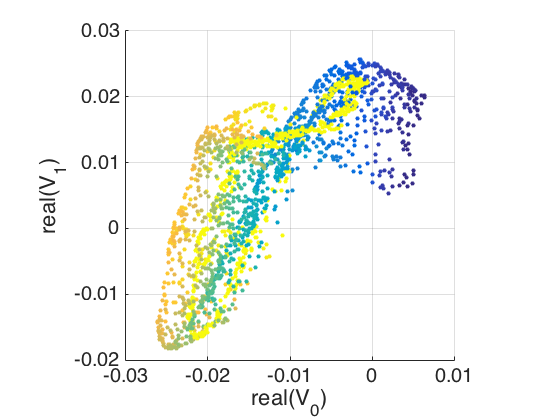} \\
Unnormalized Phase $V_0$ & Unnormalized Phase $V_1$ & Unnormalized Embedding\\
\includegraphics[width=.3\textwidth]{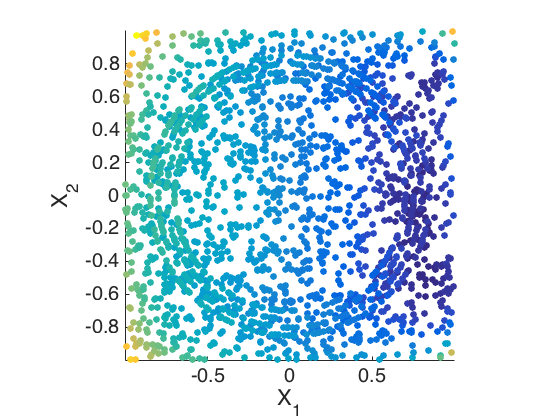} & 
\includegraphics[width=.3\textwidth]{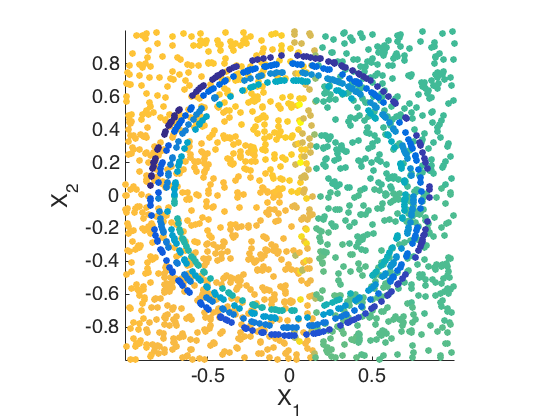} & 
\includegraphics[width=.3\textwidth]{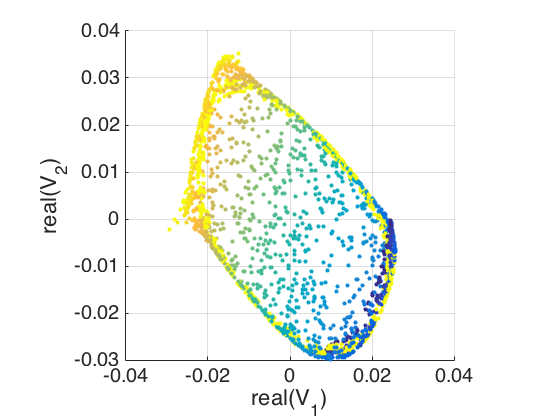} \\
Markov Normalized Phase $V_0$ & Markov Normalized  Phase $V_1$ & Markov Normalized  Embedding\\
\end{tabular}
\caption{$g=0.24$, $t=4$}\label{fig:squareCircle}
\end{figure}

Also, by not normalizing the phase rotation, low density points become separated in the phase space (see Figure \ref{fig:squareCircleTorus}).  This leads to discontinuity in the torus projection of the phase, because the top and bottom edges of the square do not have much energy coming in.  
\begin{figure}[!h]
\begin{tabular}{cc}
\includegraphics[width=.3\textwidth]{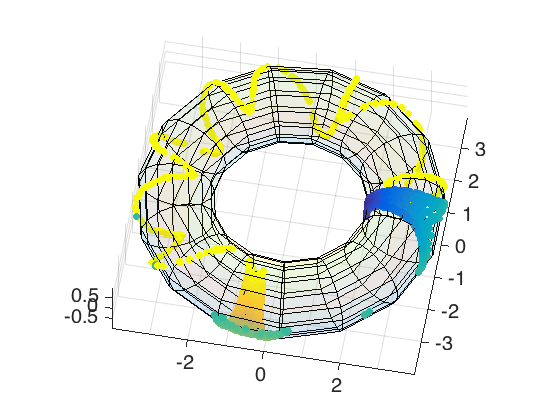} & 
\includegraphics[width=.3\textwidth]{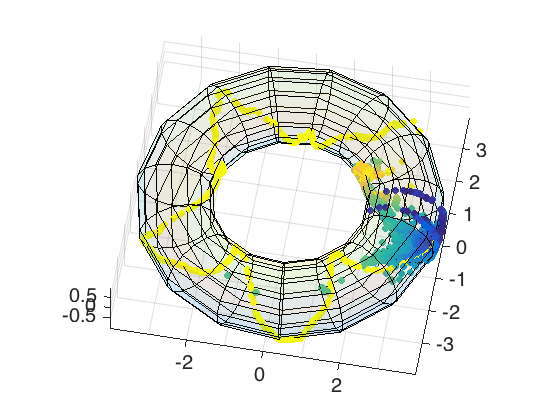} \\
Unnormalized Phase Torus & Markov Normalized Phase Torus\\
\end{tabular}
\caption{$g=0.24$, $t=1$}\label{fig:squareCircleTorus}
\end{figure}

\subsection{Absorbing States}\label{absorbingState}
As a final example, we return to the three cluster example from Section \ref{threecluster}.  We change this by removing one node's out going edges, making that node an absorbing state.  We augment the weights matrix for the normalized weights matrix by adding a teleportation parameter of $\alpha=0.1$.

\begin{figure}[!h]
\begin{tabular}{cc}
\includegraphics[width=.3\textwidth]{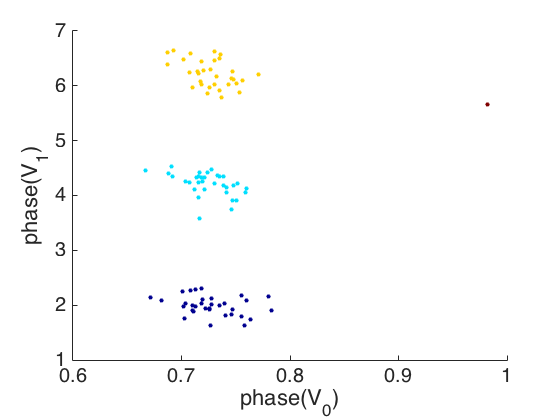} & 
\includegraphics[width=.3\textwidth]{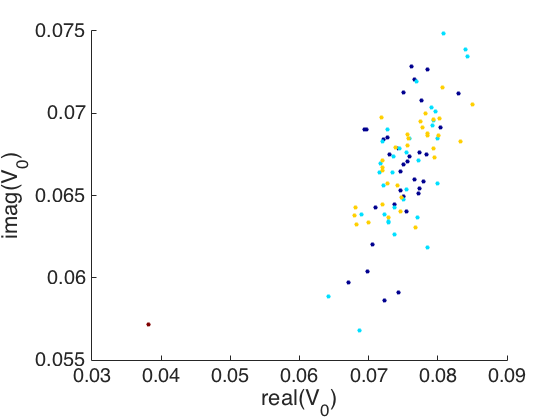} \\
Unnormalized Phase & Unnormalized Embedding \\
\includegraphics[width=.3\textwidth]{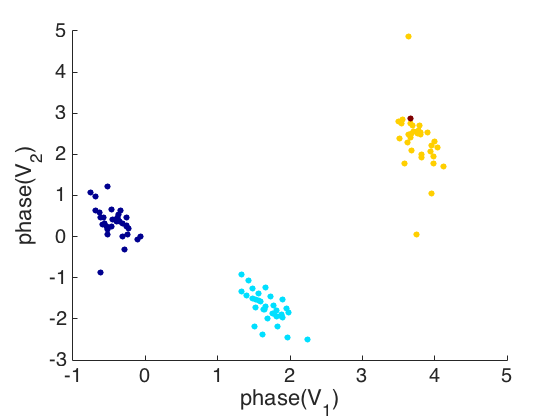} & 
\includegraphics[width=.3\textwidth]{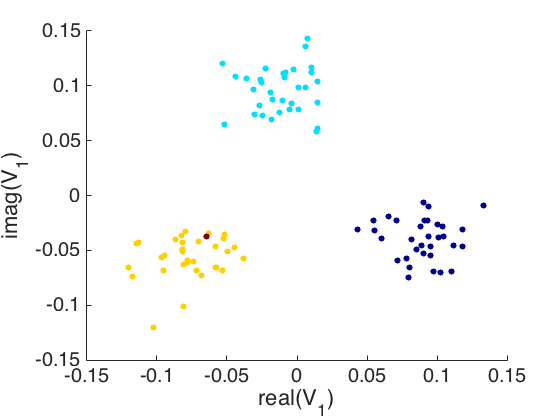} \\
Normalized Phase $t=1$ & Normalized Embedding $t=1$ \\
\includegraphics[width=.3\textwidth]{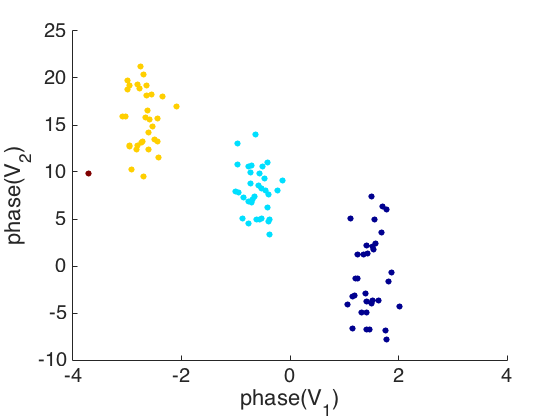} & 
\includegraphics[width=.3\textwidth]{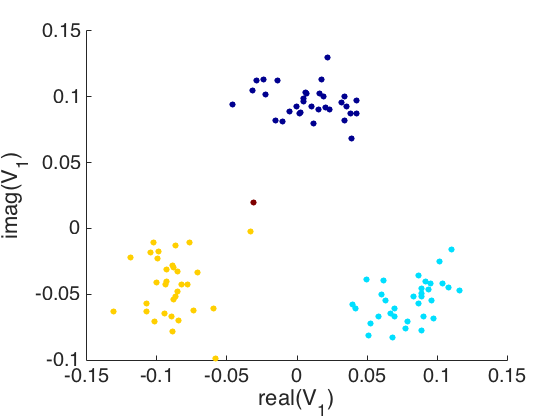} \\
Normalized Phase $t=5$ & Normalized Embedding $t=5$ 
\end{tabular}
\caption{Three cluster data with absorbing state from orange cluster.  Absorbing state in red.  Top: diffusion time of 1, Bottom: diffusion time of 5.  $g=0.04$}\label{fig:absorbingState}
\end{figure}

We see in Figure \ref{fig:absorbingState} that, as time progresses, the state becomes more central to the Markov normalized process while still maintaining the geometry of the three clusters.  Moreover, the phase of the principal eigenvector converges to the page rank of the process, correctly placing the absorbing state in middling importance as in Figure \ref{fig:absorbingStatePageRank} in the appendix.
This is compared to the unnormalized magnetic Laplacian, which captures the absorbing nature of the red node, but loses the geometry of the other clusters or the relationship between the absorbing state and cluster 3. 

%

\section{Conclusion}
We note that Markov normalization of the graph adjacency matrix yields several benefits.  Normalization of the complex rotation matrix $T$ converges to the page rank of the graph for a large time, as in Section \ref{threecluster}.  The Markov normalization also stabilizes the phase embedding over a larger choice of $g$, as in most examples in Section \ref{examples}.  By emphasizing the absorbing states in the first eigenvector, this normalization separates the recurrent states into subsequent eigenvectors as in Section \ref{hiddenCircle}.  And finally, we can introduce diffusion time into the process and observe the multi-step neighborhood dynamics of the graph, as in most examples in Section \ref{examples}.


\newpage
\section*{Appendix}

\begin{figure}[!h]
\begin{tabular}{ccc}
\includegraphics[width=.3\textwidth]{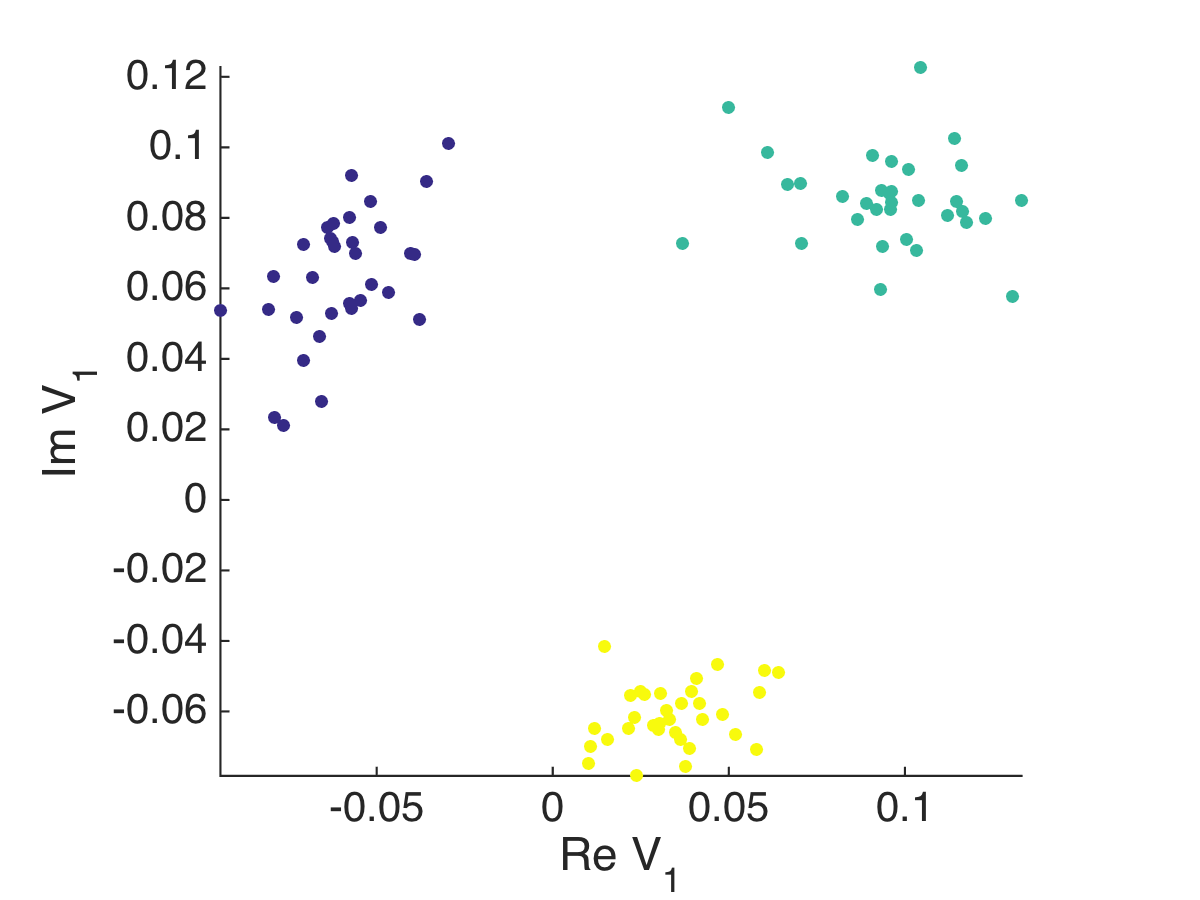} & 
\includegraphics[width=.3\textwidth]{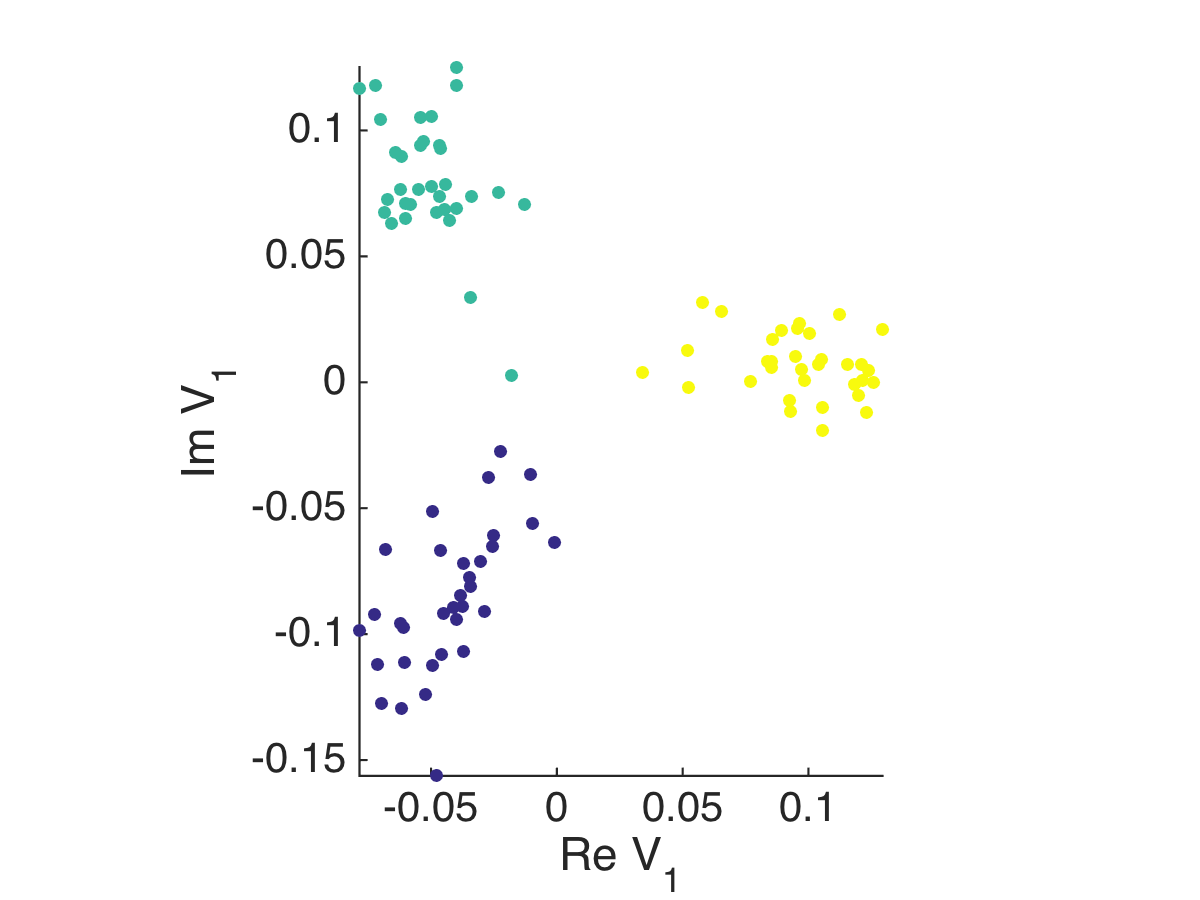} &
\includegraphics[width=.3\textwidth]{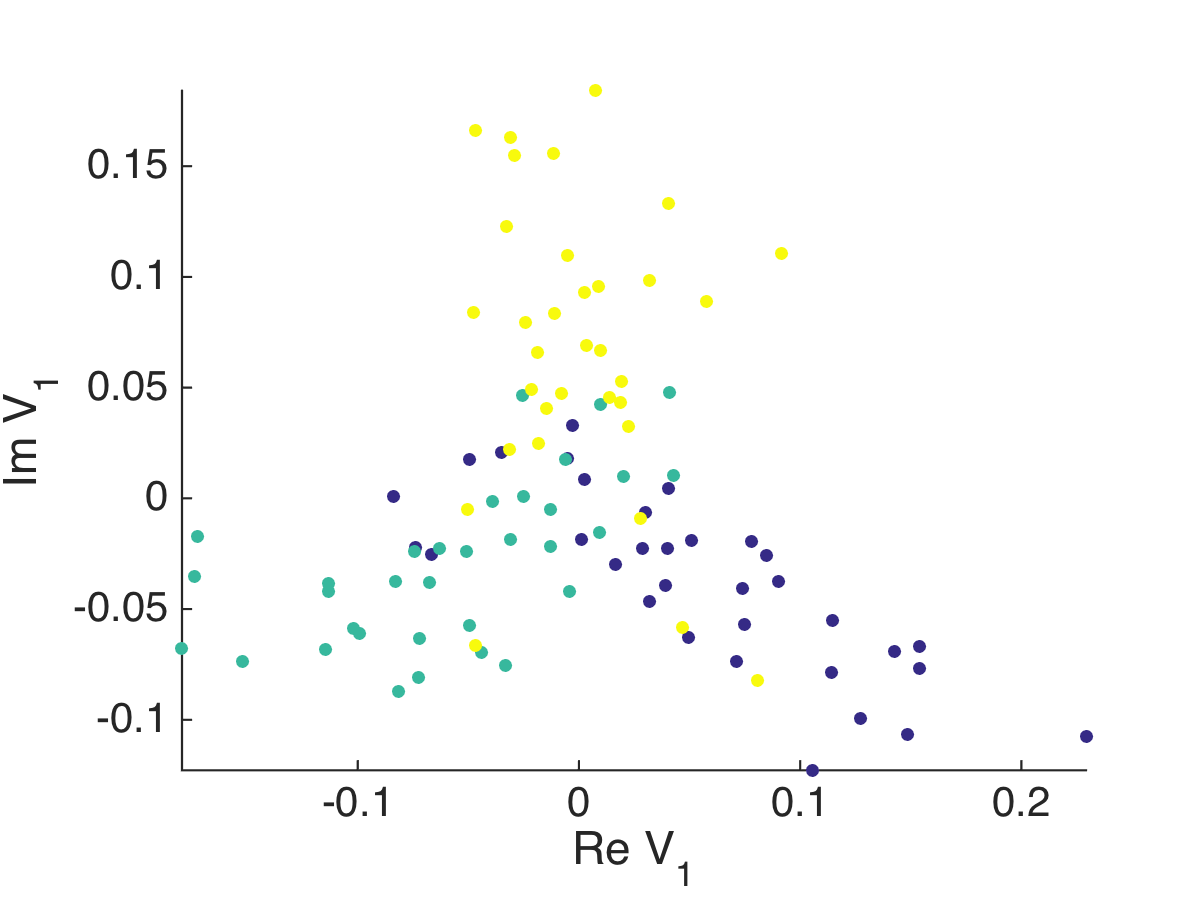} \\
t=1 & t=2 & t=3 \\
\includegraphics[width=.3\textwidth]{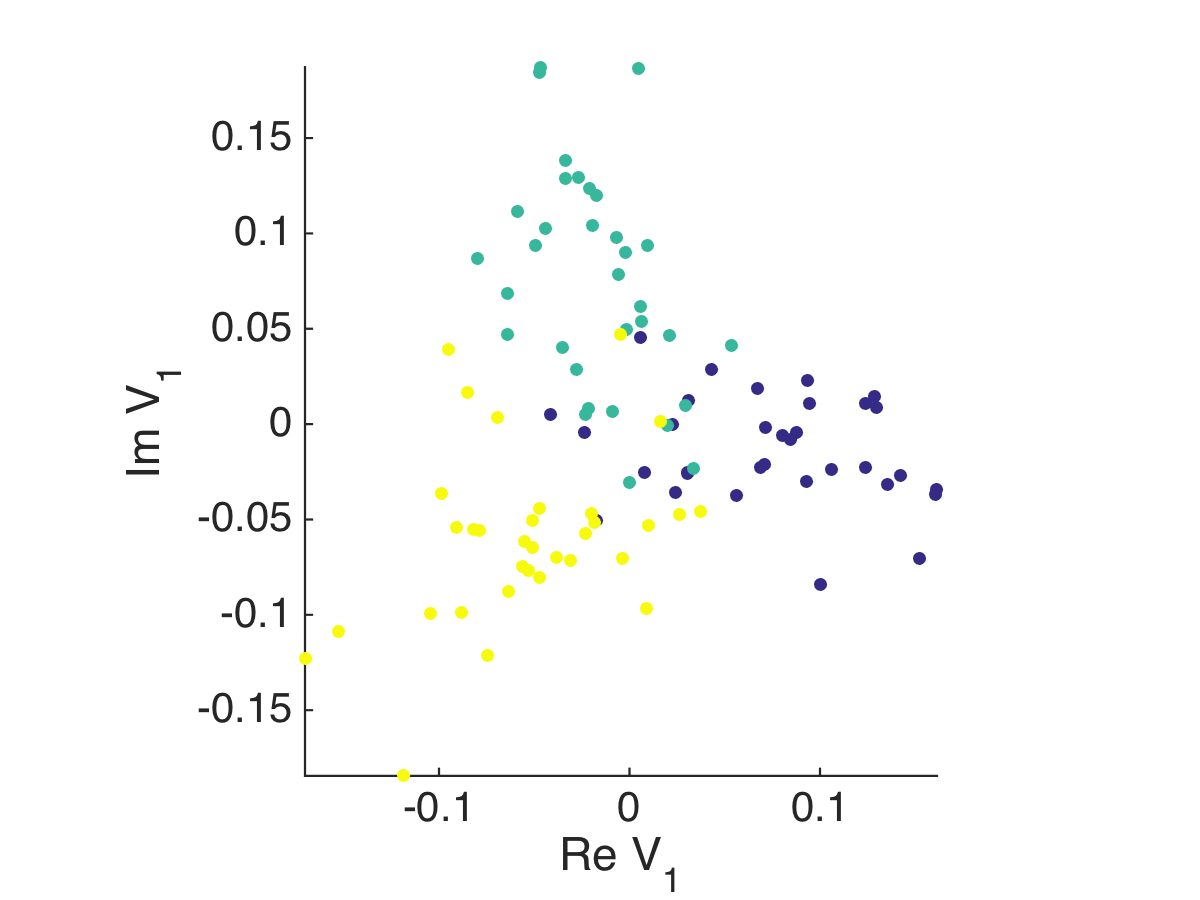} & 
\includegraphics[width=.3\textwidth]{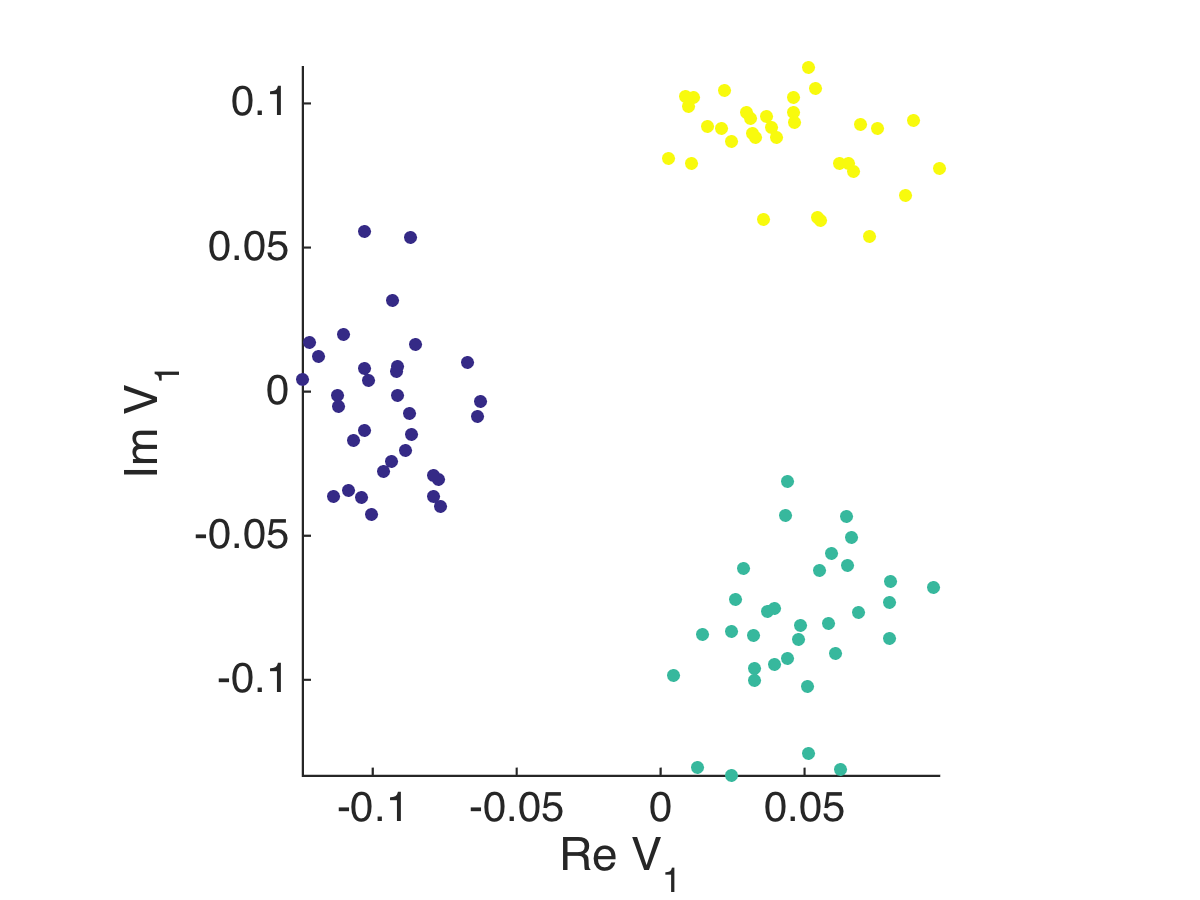} &
\includegraphics[width=.3\textwidth]{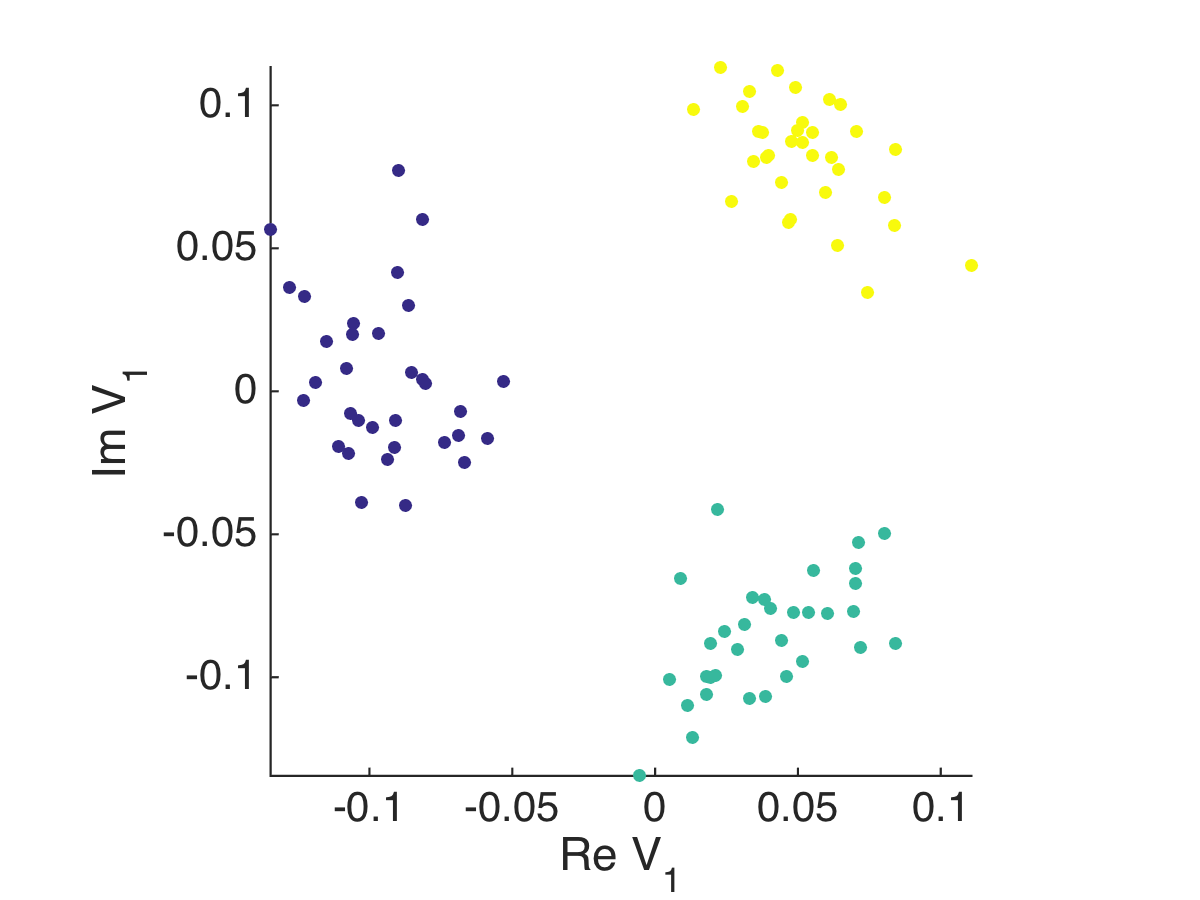} \\
t=4 & t=5 & t=6 \\
\includegraphics[width=.3\textwidth]{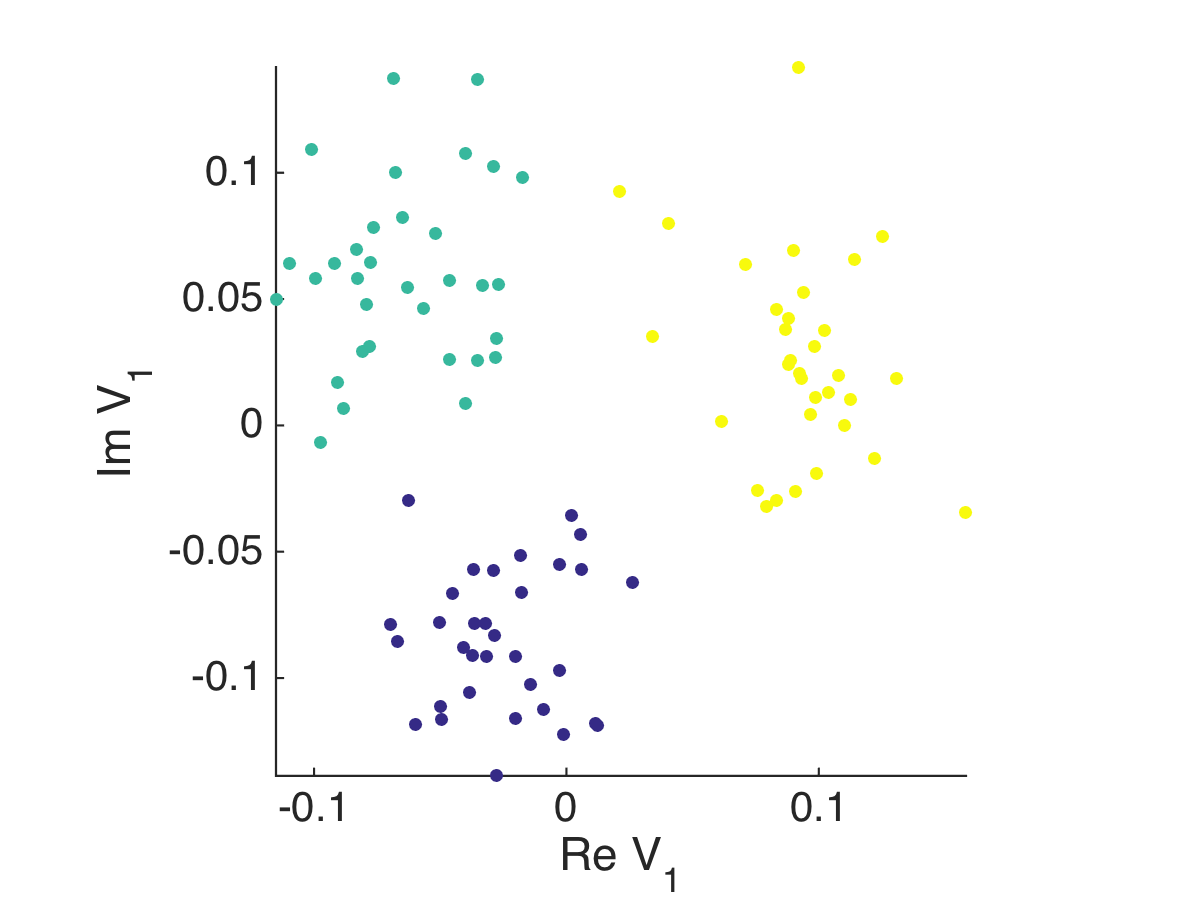} & 
\includegraphics[width=.3\textwidth]{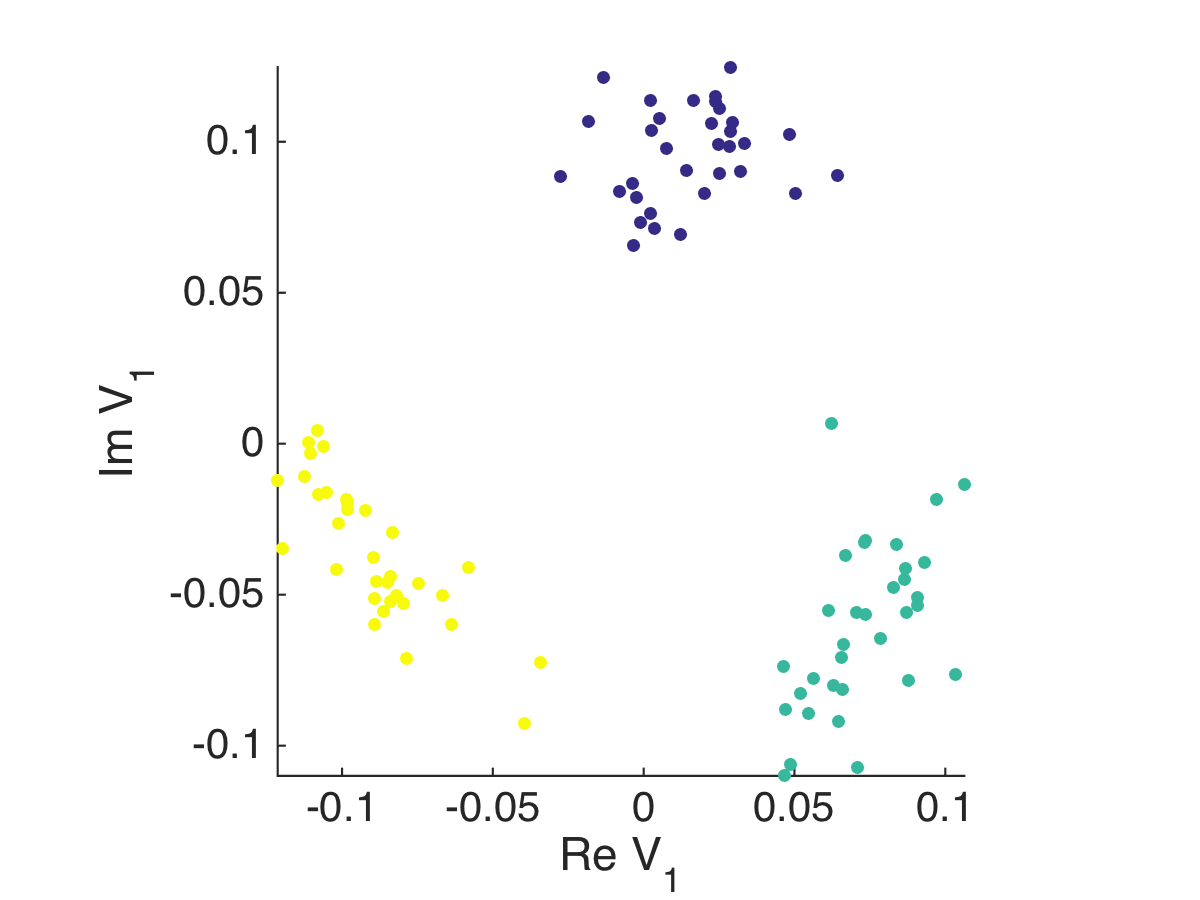} &
\includegraphics[width=.3\textwidth]{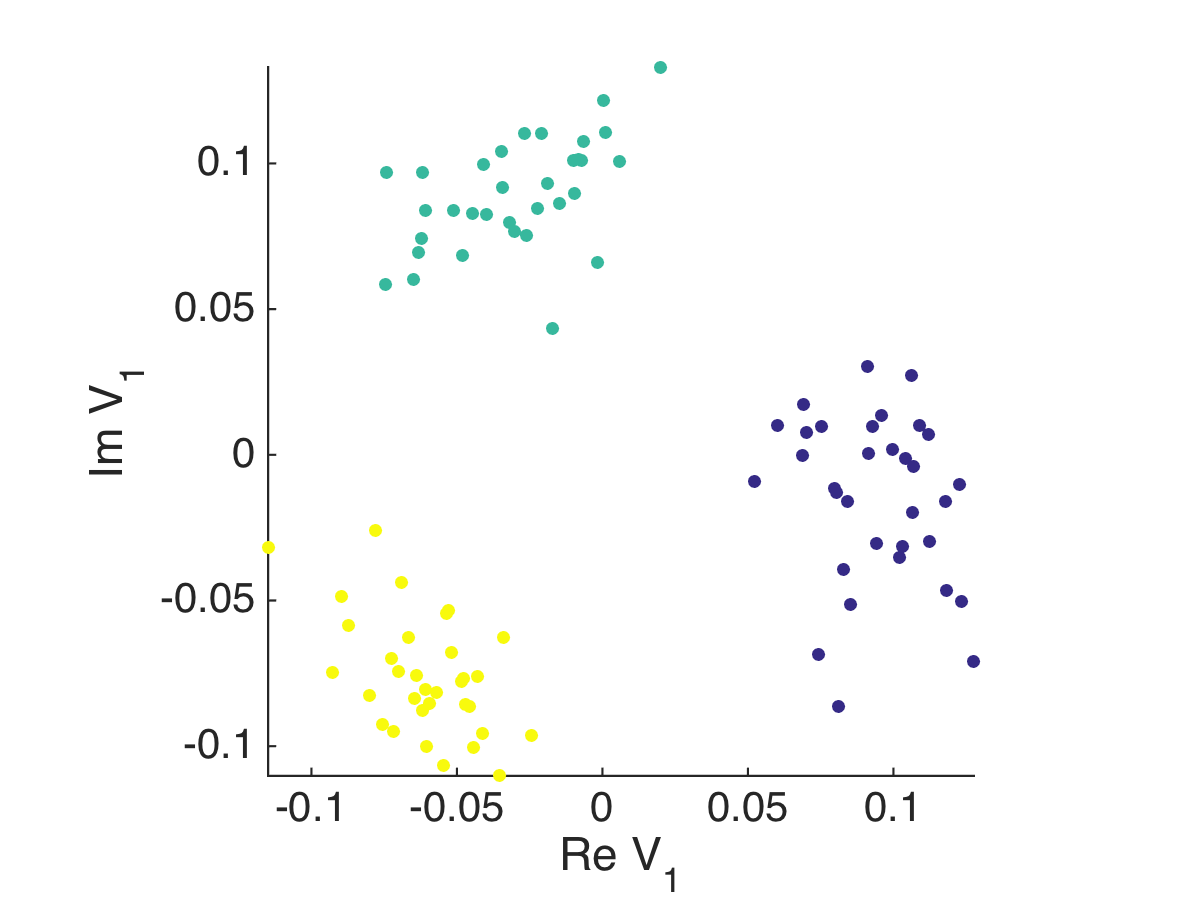} \\
t=7 & t=8 & t=9 \\
\end{tabular}
\caption{Three clusters with $P(inCluster)=0.5$, $P(outCluster)=0.5$, $P(rotateClockwise) = 0.9$, and with $g=1/4$.  Evolve cluster from $t\in\{1,...,9\}$.}\label{fig:theeClusterSpin}
\end{figure}

\begin{figure}[!h]
\begin{center}
\includegraphics[width=.4\textwidth]{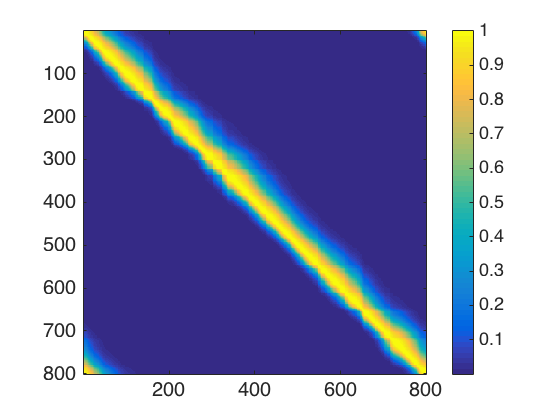}
\end{center}
\caption{Affinity matrix for circle with drift}\label{fig:circleDriftAff}
\end{figure}

\begin{figure}[!h]
\begin{tabular}{ccc}
\includegraphics[width=.3\textwidth]{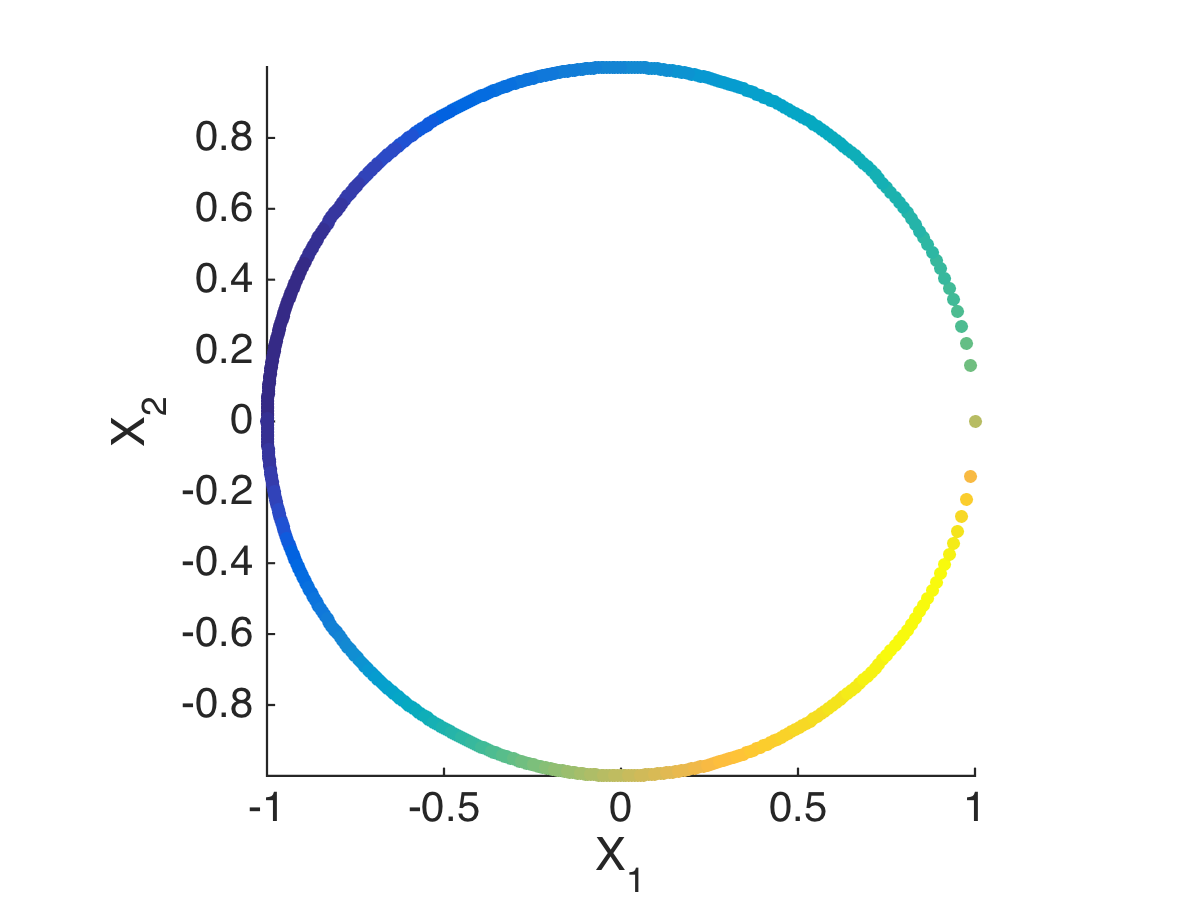} & 
\includegraphics[width=.3\textwidth]{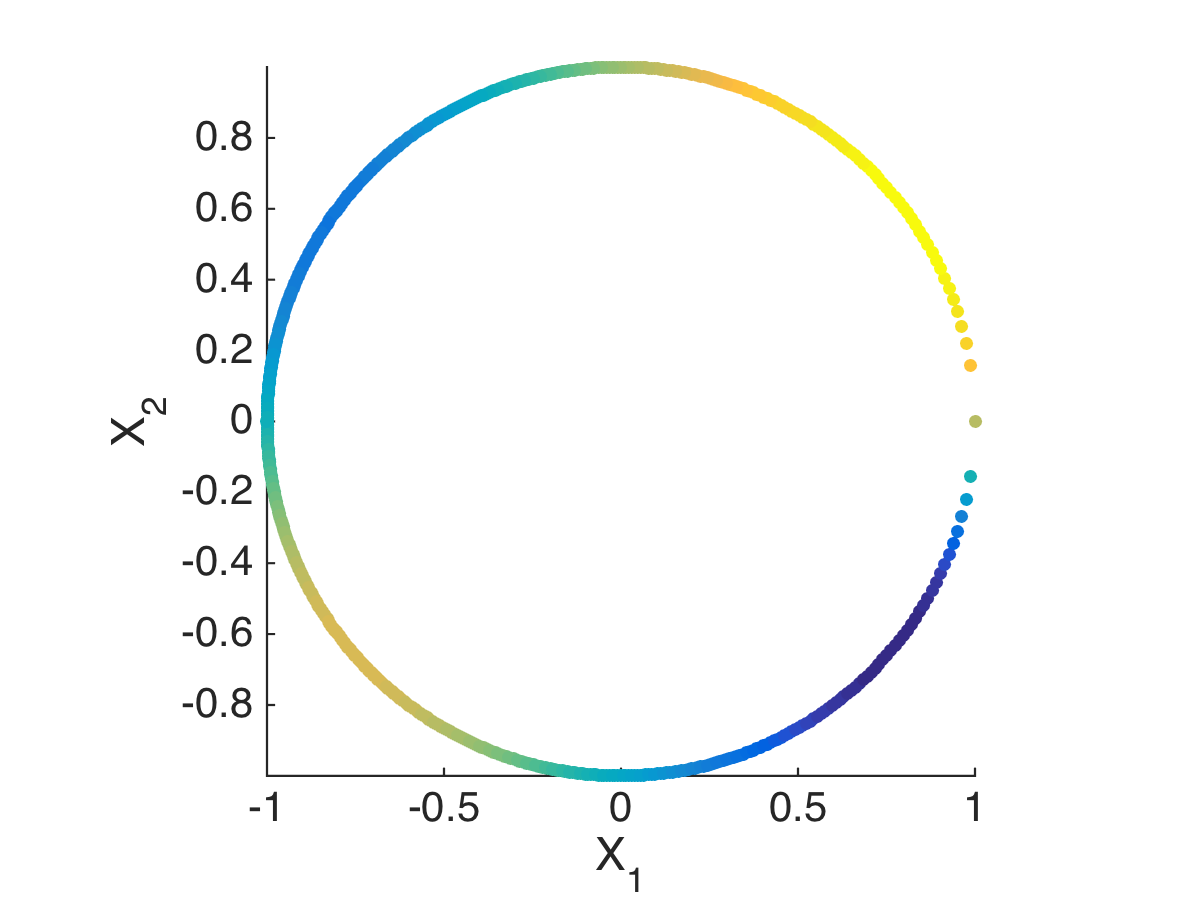} &
\includegraphics[width=.3\textwidth]{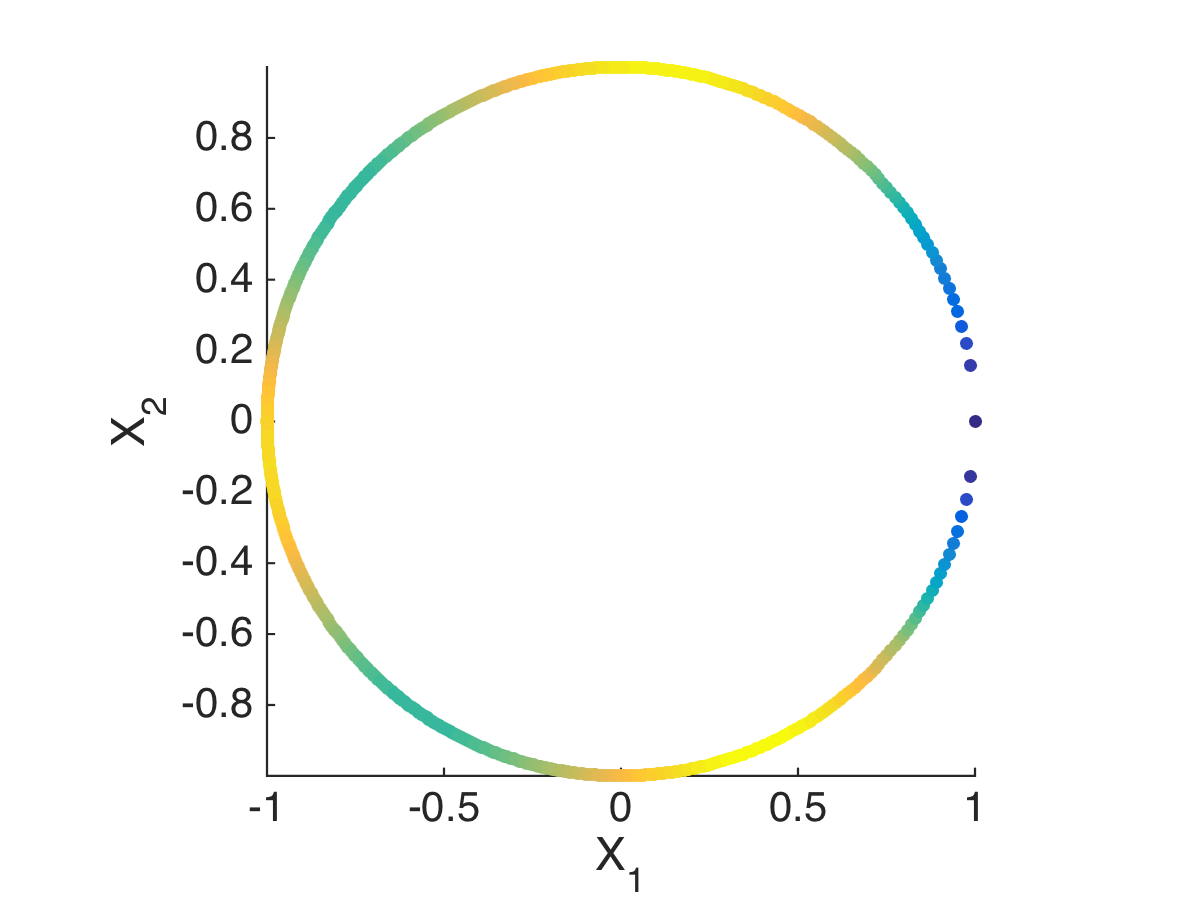} \\
Normalized real$(\phi_1)$ & Normalized  real$(\phi_3)$ &Normalized  real$(\phi_5)$\\
\includegraphics[width=.3\textwidth]{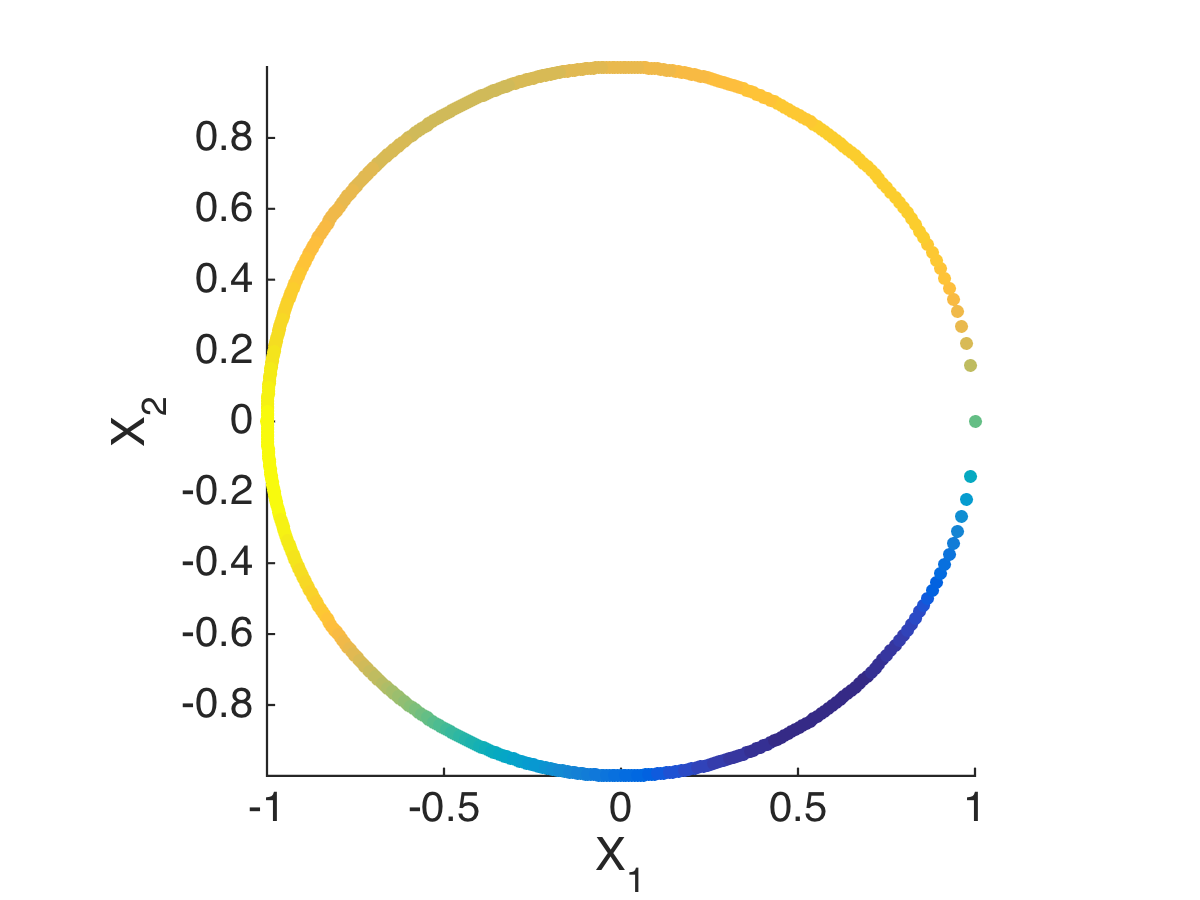} & 
\includegraphics[width=.3\textwidth]{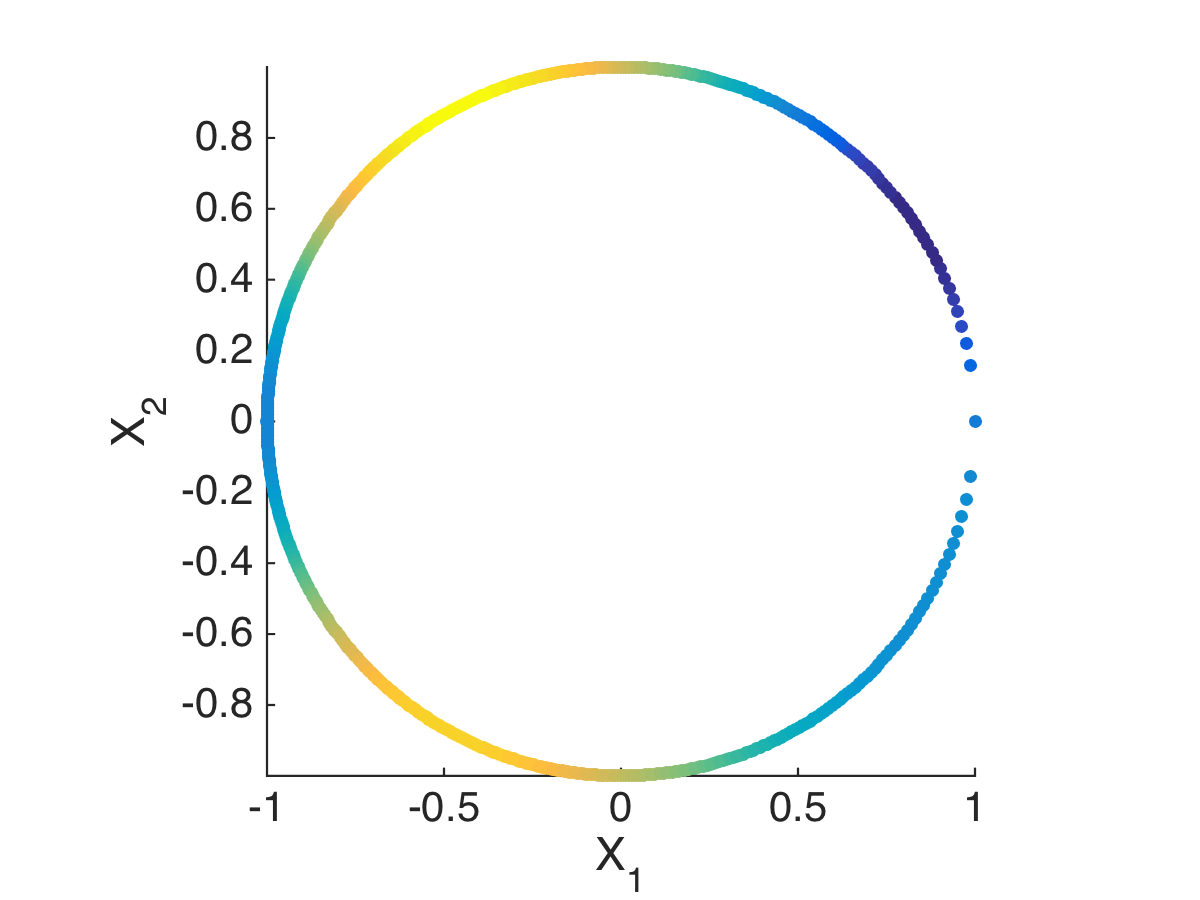} &
\includegraphics[width=.3\textwidth]{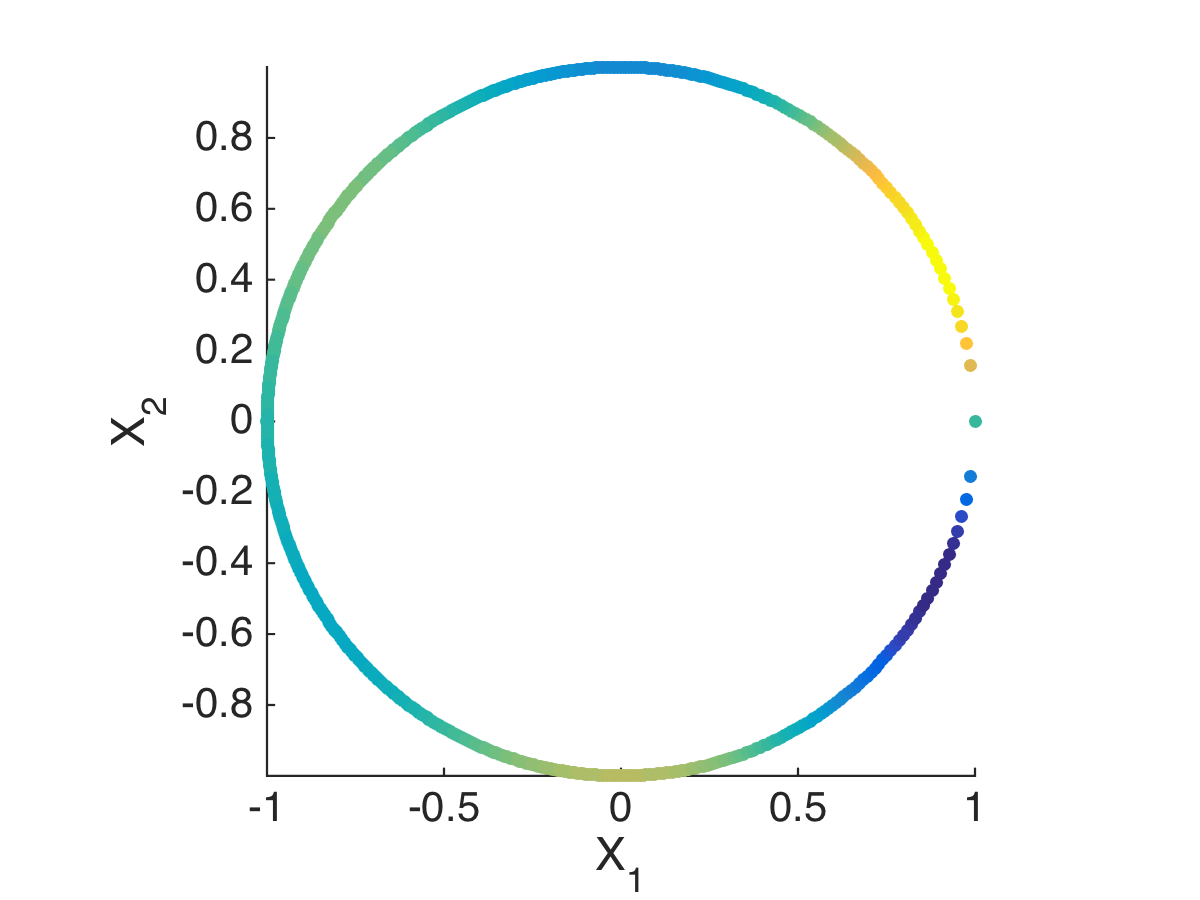} \\
Unnormalized real$(\phi_1)$ &Unnormalized real$(\phi_3)$ &Unnormalized real$(\phi_5)$\\
\end{tabular}
\caption{Circle with drift, colored by corresponding real part of real$(\phi_i)$.  Top: Markov normalized, Bottom: Unnormalized.  The eigenvectors correspond to multiples of sinusoids, the eigenvectors of the DFT matrix.  }\label{fig:circleDriftEigs}
\end{figure}

\begin{figure}[!h]
\begin{tabular}{ccc}
\includegraphics[width=.3\textwidth]{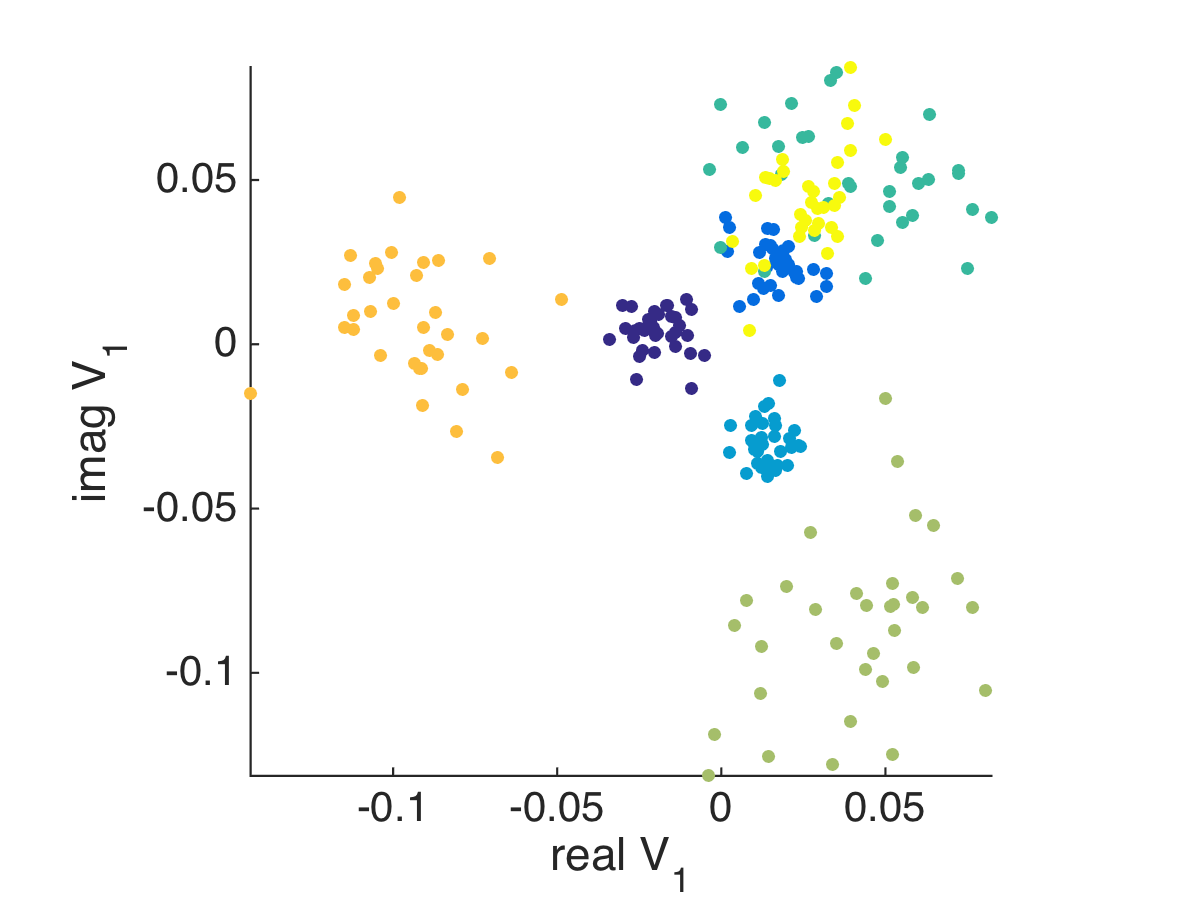} & 
\includegraphics[width=.3\textwidth]{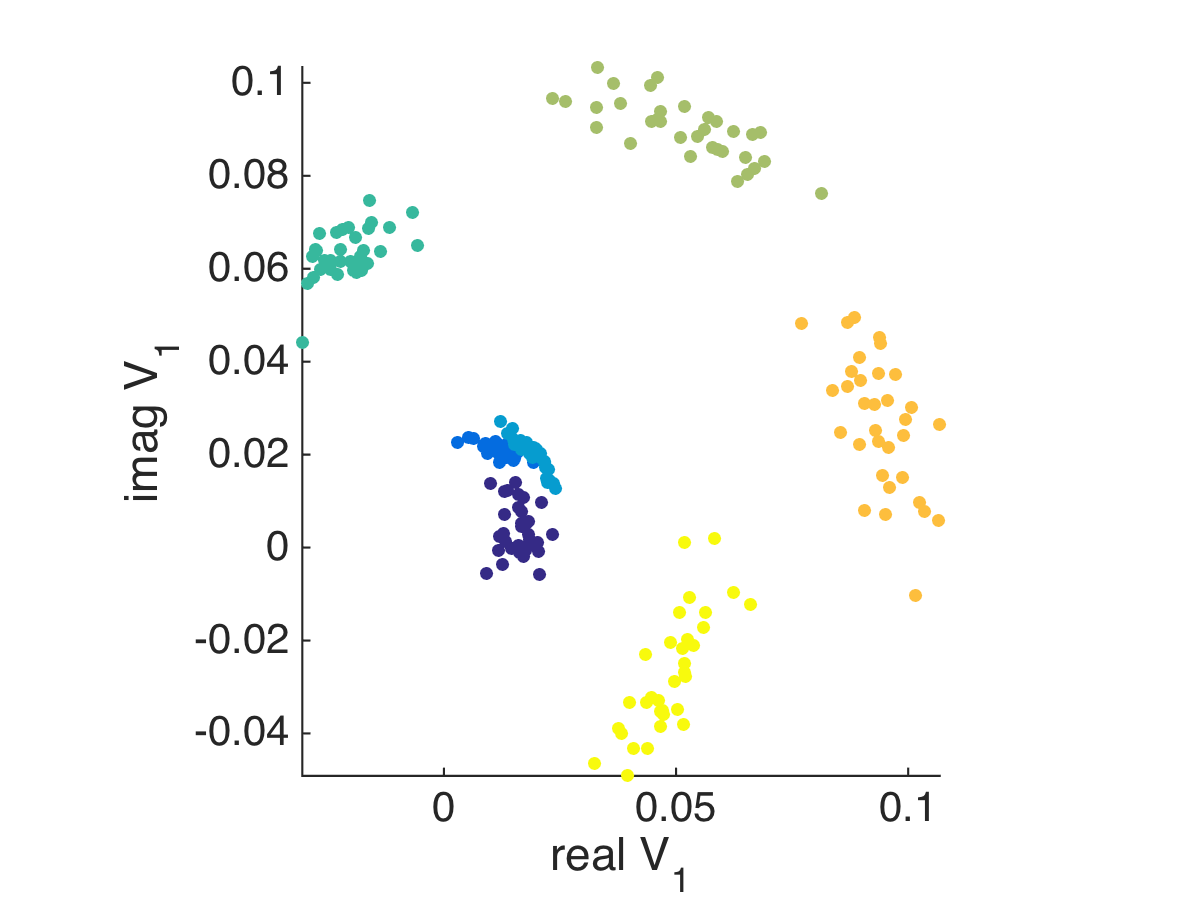} &
\includegraphics[width=.3\textwidth]{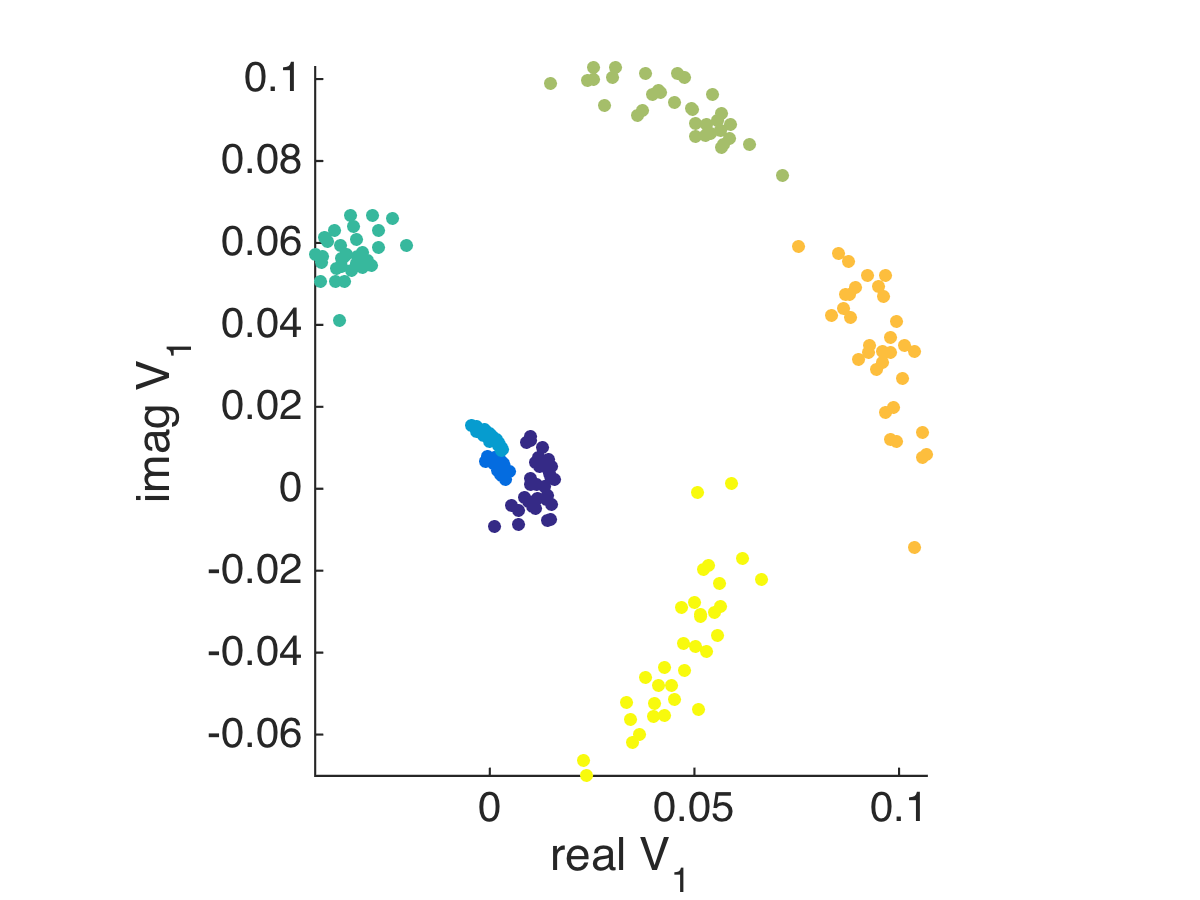} \\
t=1 & t=2 & t=3 \\
\includegraphics[width=.3\textwidth]{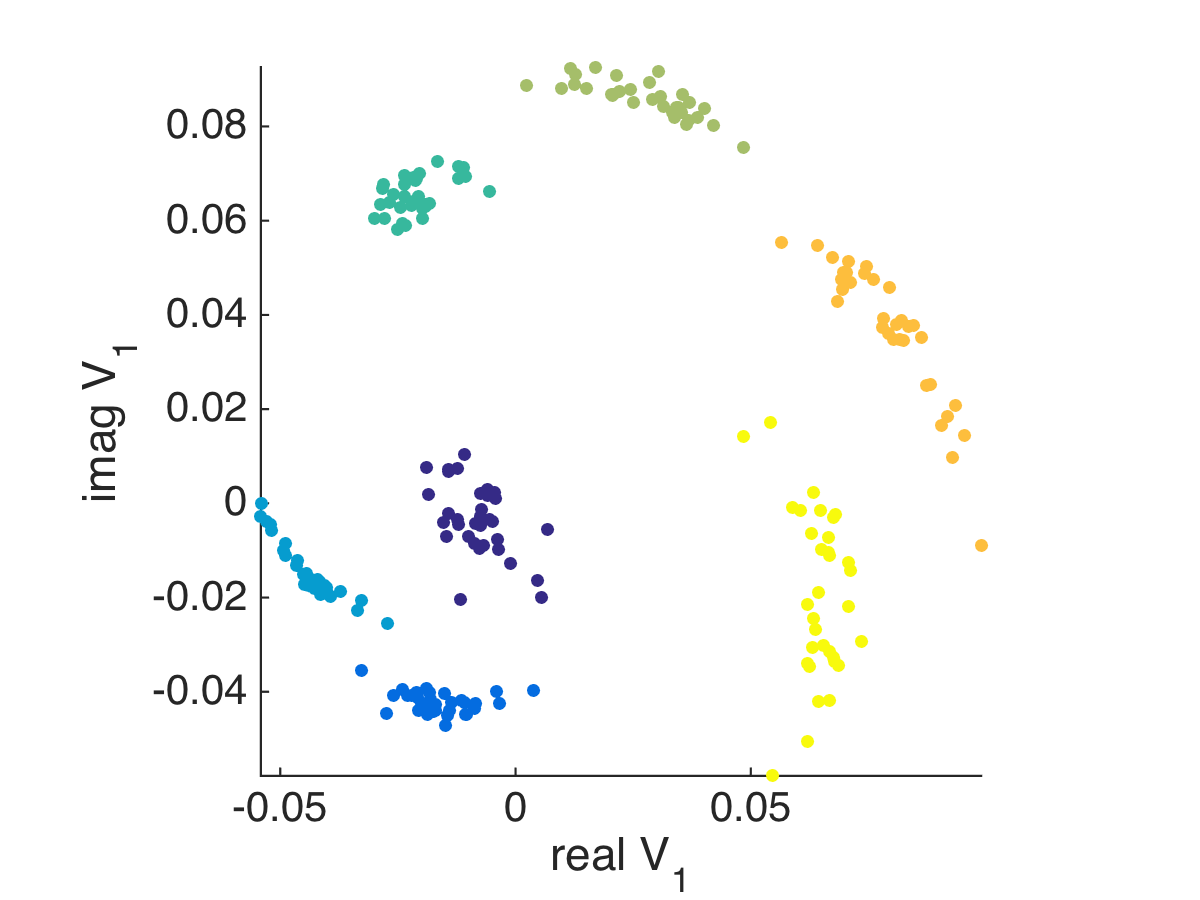} & 
\includegraphics[width=.3\textwidth]{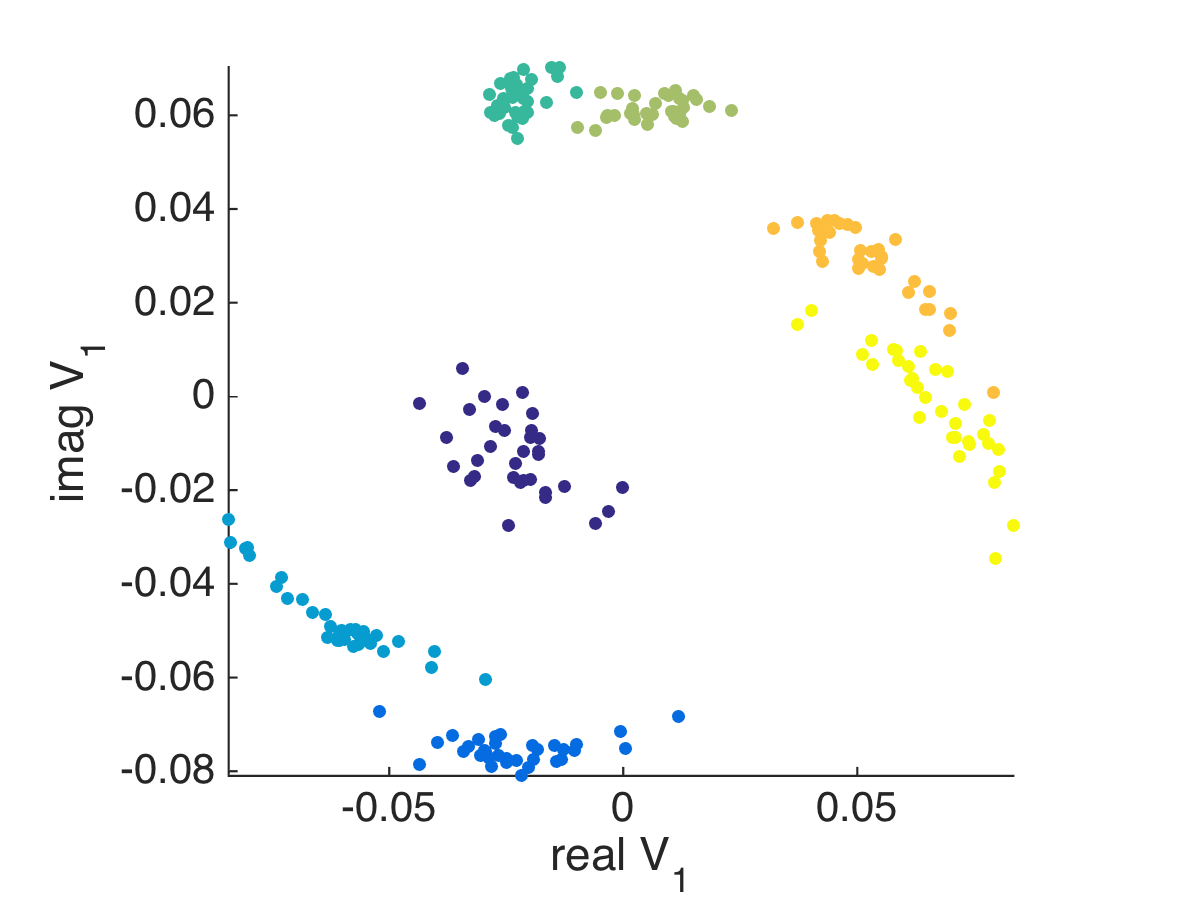} &
\includegraphics[width=.3\textwidth]{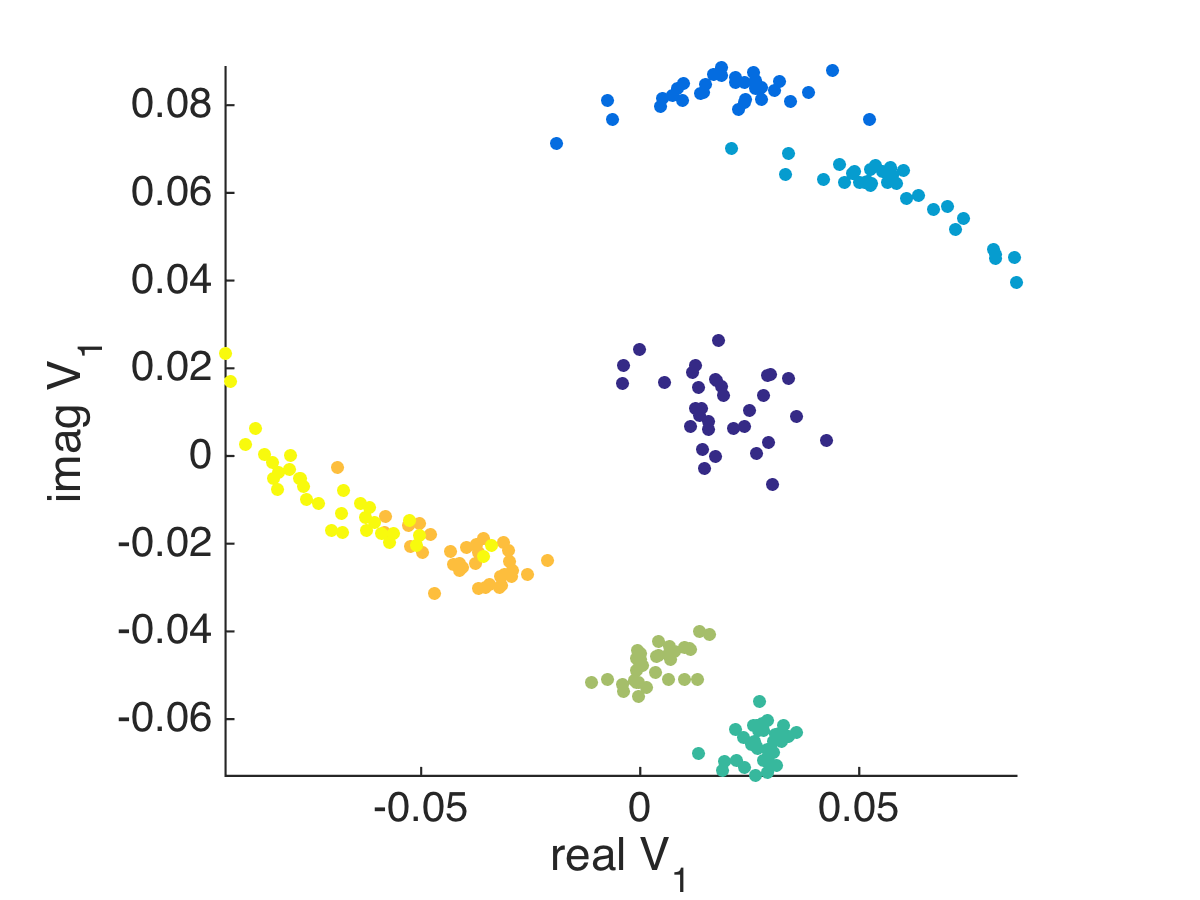} \\
t=4 & t=5 & t=6 \\
\includegraphics[width=.3\textwidth]{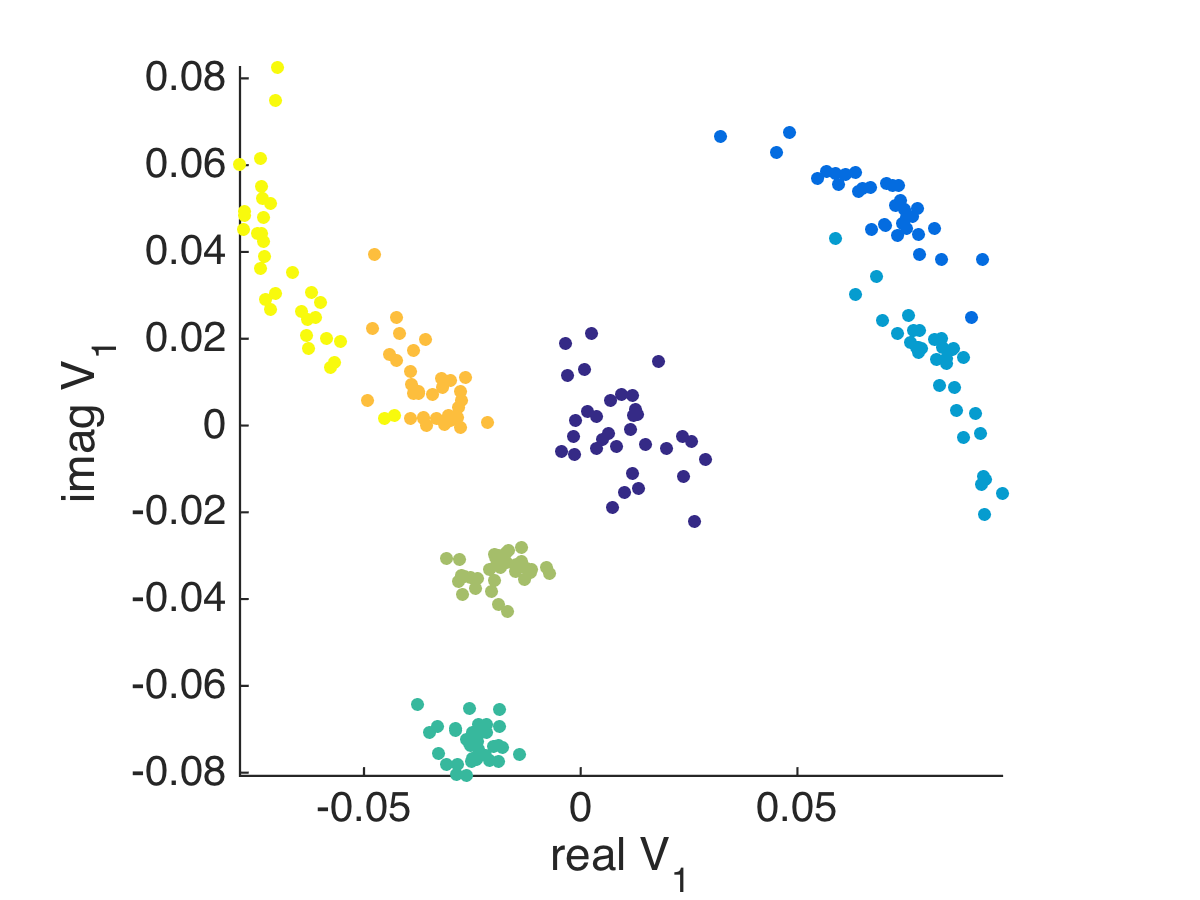} & 
\includegraphics[width=.3\textwidth]{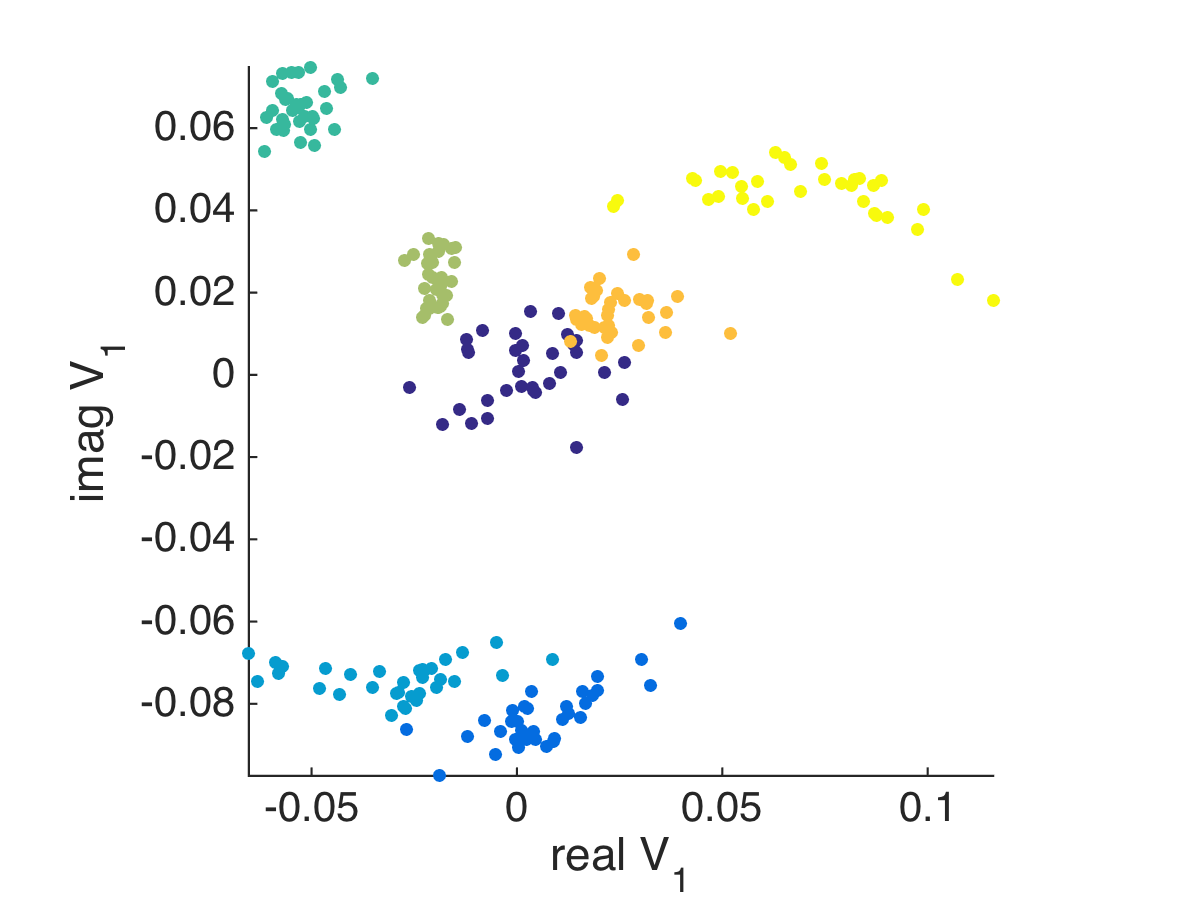} &
\includegraphics[width=.3\textwidth]{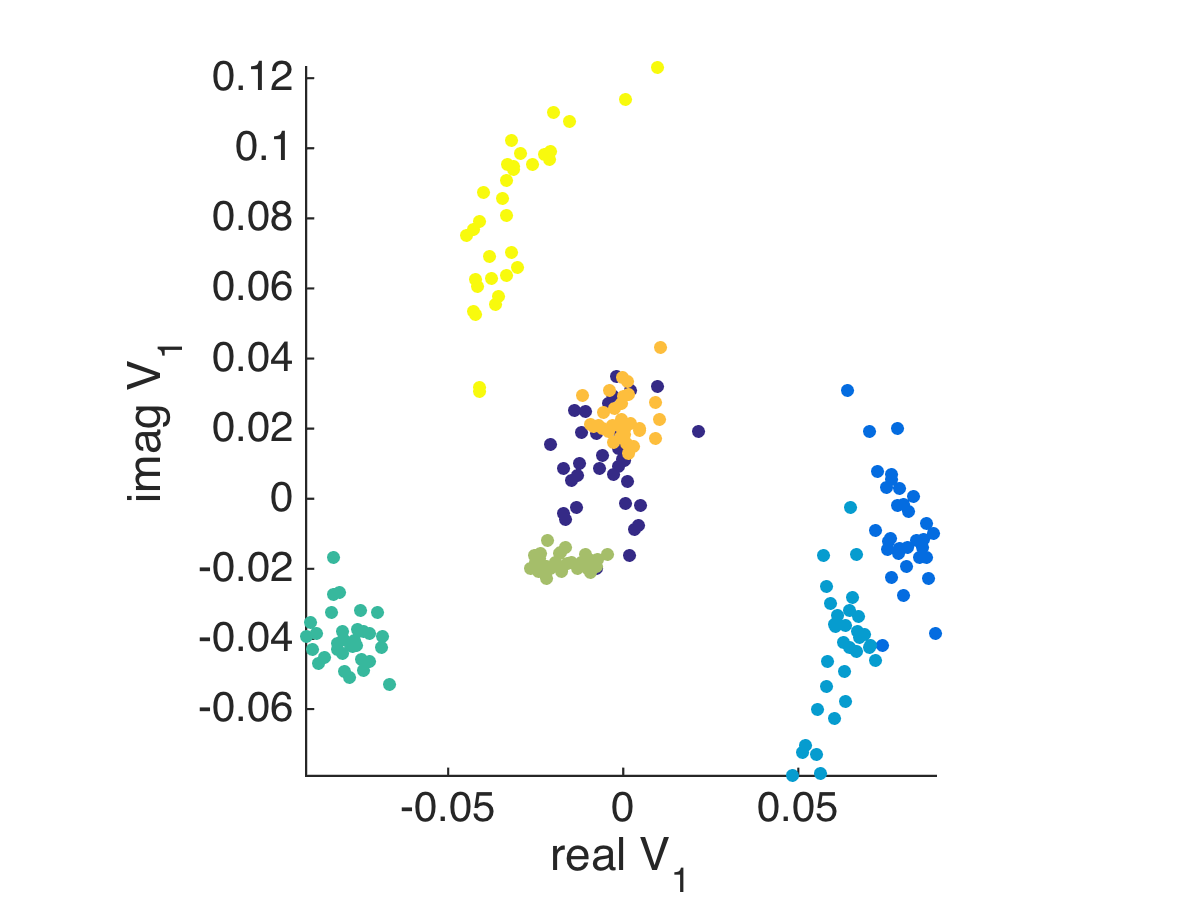} \\
t=7 & t=8 & t=9 \\
\end{tabular}
\caption{Bow tie clusters with $P(inCluster)=0.5$, $P(outCluster)=0.5$, $P(rotateClockwise) = 0.9$, and with $g=1/4$.  Evolve cluster from $t\in\{1,...,9\}$.}\label{fig:bowtieSpin}
\end{figure}

\begin{figure}[!h]
\begin{tabular}{cc}
\includegraphics[width=.3\textwidth]{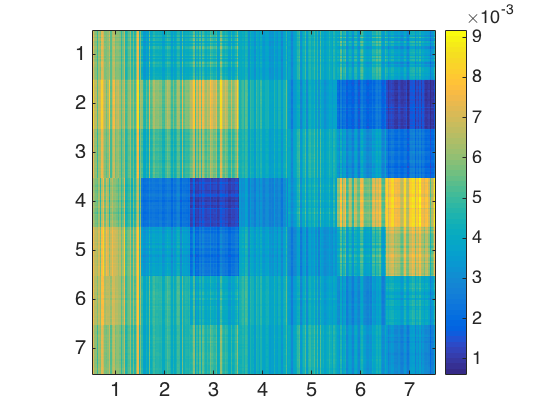} & 
\includegraphics[width=.3\textwidth]{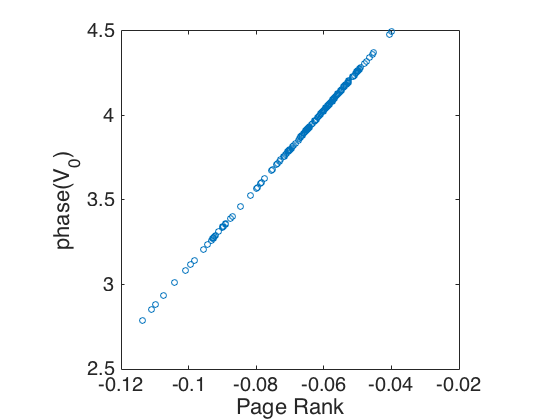} 
\end{tabular}
\caption{Bow tie clusters affinity at $t=7$ (left), and bow tie clusters phase$(V_0)$ at $t=10$ (right).}\label{fig:bowtieAff}
\end{figure}

\begin{figure}[!h]
\begin{center}
\includegraphics[width=.4\textwidth]{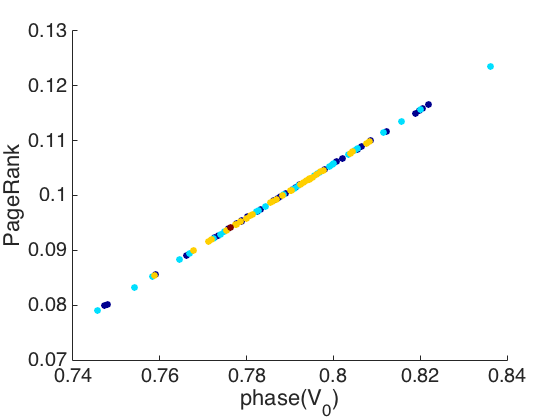} 
\end{center}
\caption{Absorbing state principal eigenvector phase compared to the page rank of the process.}\label{fig:absorbingStatePageRank}
\end{figure}

\end{document}